\documentclass[aps,prx,10pt,twocolumn,superscriptaddress,nofootinbib,longbibliography]{revtex4-2}
\newif\iflong\longtrue
\newif\ifstateonly\stateonlytrue

\usepackage[T1]{fontenc}
\usepackage{lmodern}
\usepackage{amsmath,amssymb,amsthm,mathtools}
\usepackage{microtype}
\usepackage{graphicx,booktabs}
\usepackage{xcolor}
\usepackage[normalem]{ulem}
\usepackage{pgfplots}
\pgfplotsset{compat=1.18}
\usetikzlibrary{arrows.meta,positioning}
\usepackage[hidelinks]{hyperref}
\hypersetup{pdftitle={Complexity Barriers to State Preparation in Quantum Approximate Optimization},pdfauthor={Stuart Hadfield},pdfsubject={Gain-calibrated state-preparation hardness for quantum approximate optimization}}
\newtheorem{theorem}{Theorem}
\newtheoremstyle{informal}{2pt}{2pt}{\normalfont}{}{\bfseries}{.}{0.5em}{}
\theoremstyle{informal}
\newtheorem*{informaltheorem}{Theorem (informal)}
\theoremstyle{plain}
\newtheorem{corollary}{Corollary}
\newtheorem{proposition}{Proposition}
\newtheorem{revisedproposition}[proposition]{\revcut{Corollary}\revnext{Proposition}}
\newtheorem{definition}{Definition}
\DeclareMathOperator{\Tr}{Tr}
\newcommand{\NP}{\mathrm{NP}}
\newcommand{\BQP}{\mathrm{BQP}}
\newcommand{\Id}{I}
\newcommand{\Cstar}{C^\star}
\newcommand{\gstar}{g^\star}
\newcommand{\genc}[1]{g_{\mathrm{enc}}(#1)}
\newcommand{\gdec}[1]{g_{\mathrm{dec}}(#1)}
\newcommand{\gqstar}{g_{\mathrm q}^{\star}}
\newcommand{\Renergy}[1]{R_{\mathrm E}(#1)}
\newcommand{\Renc}[1]{R_{\mathrm{enc}}(#1)}
\newcommand{\Rdec}[1]{R_{\mathrm{dec}}(#1)}
\newcommand{\Rq}[1]{R_{\mathrm q}(#1)}
\newcommand{\E}{\mathbb E}

\newcommand{\poly}{\operatorname{poly}}
\newcommand{\revnew}[1]{#1}
\newcommand{\revlatest}[1]{#1}
\newcommand{\revnext}[1]{#1}
\newcommand{\revcut}[1]{}
\newcommand{\branchadd}[1]{#1}
\makeatletter
\def\branchcut@words#1 #2\@nil{%
  \sout{#1}%
  \if\relax\detokenize{#2}\relax
  \else\space\branchcut@words#2\@nil\fi}
\newcommand{\branchcut}[1]{}
\makeatother

\newcommand{\subsectionview}[2]{\iflong\subsection{#2}\label{sec:#1}\else\par\emph{#2.---}\fi}
\newcommand{\techheading}[2]{\iflong\section{#2}\label{app:#1}\else\subsection*{#2}\fi}

\usepackage{capt-of}
\begin{document}
\raggedbottom

% ===== BEGIN shortest/frontmatter.tex =====
\newif\ifarxivtagline
\arxivtaglinetrue % PDF-only arXiv v1 tagline; disable for the journal PDF.
\title{Complexity Barriers to State Preparation in Quantum Approximate Optimization%
\ifarxivtagline\branchadd{\\[0.5ex]{\normalfont\small\itshape No pain, no gain}}\fi}
\author{Stuart Hadfield}
\email{shadfield@usra.edu}
\thanks{ORCID: \href{https://orcid.org/0000-0002-4607-3921}{0000-0002-4607-3921}.}
% \affiliation{Quantum Artificial Intelligence Laboratory, NASA Ames Research Center, Moffett Field, California 94035, USA}
\affiliation{USRA Research Institute for Advanced Computer Science, Mountain View, California 94043, USA}
\date{\today}
\begin{abstract}
For many important optimization problems we are restricted to approximate solutions in practice due to computational complexity.
Distinct from the exact optimization setting, approximate optimization admits performance measures beyond whether the optimum is found, with different tradeoffs and complexity.
For MaxCut, a near-unity (ordinary) approximation ratio can coexist with near-zero improvement (gain) over a random cut. For the standard encoding, the unconditional classical MaxCut-Gain hardness gap implies that \emph{any uniformly efficient quantum or hybrid procedure recovering a fixed positive fraction of the optimal classical gain on every input, with at least inverse-polynomial success probability, would place $\NP$ in $\BQP$}. Such a procedure is therefore believed impossible under standard assumptions.
We broadly address where our worst-case barriers do or do not apply across the quantum algorithm landscape. 
We prove that the barrier survives quantum random access optimization (QRAO) compression and applies throughout the mean-energy range $\Cstar\le E_\rho\le\lambda_{\max}(H_d)$ between the classical and relaxed optimal values. 
For every input, a product state attains the classical optimum. Thus the barrier to reaching the classical threshold does not arise from a need for entanglement.
For $d\in\{2,3\}$ variables per qubit, the known decoder transfers encoded energy gain to decoded mean gain by the exact factor $1/d^2$. Combining this identity with MaxCut-Gain hardness gives an operational preparation barrier for QRAO. We also construct hard $n$-qubit families with relative quantum relaxation excess $\Theta(1/n)$, while the maximally mixed state has energy approximation ratio $1-\Theta(1/n)$, zero encoded energy gain, and hence zero decoded mean gain.
Our results separate the effects of relaxation tightness and energy approximation ratio from operational accessibility, motivating more comprehensive practical accounting of decoded gain, readout, precision, and end-to-end cost in benchmarking and performance assessment.
\end{abstract}
\maketitle

% ===== END shortest/frontmatter.tex =====

\section{Introduction}\label{sec:intro}

% ===== BEGIN shortest/introduction.tex =====
Quantum optimization encompasses a broad set of approaches using quantum computers to obtain classical solutions to hard combinatorial problems. Different settings include exact, approximate, and heuristic regimes, whose complexity, \revnew{resource requirements}, performance, and benchmarking claims must be carefully distinguished~\cite{Abbas2024}. \revnew{For both} near-term and polynomially bounded settings, standard computational complexity assumptions %suggest 
imply that approximate solutions are the best we can hope to obtain efficiently for many important problems in the worst case. Existing performance guarantees, when obtainable, often concern the mean energy of prepared quantum states, which %, after a specified readout, 
bounds the expected objective value and hence the %ordinary 
algorithm approximation ratio. 
At the same time, concentration bounds can sometimes constrain the distribution of cuts produced by repeated runs on one specified problem instance. 
%This is distinct from averaging performance over a randomly drawn ensemble of instances. 
%At the same time, concentration bounds can limit the probability that repeated runs on a given problem instance produce solutions substantially better than the mean. 
Hence a common focus of quantum optimization
research has been to produce or demonstrate quantum
algorithms with improved approximation ratio bounds,
relative to previous ans\"atze or to classical benchmarks.

The achieved ordinary approximation ratio can nevertheless be an incomplete performance indicator, if not misleading. For example, a fixed approximation ratio does not indicate whether the same performance can be achieved by a classical algorithm, or how the algorithm performs on typical instances.  
This issue is reflected in the classical literature, where metrics beyond ordinary approximation ratio are also considered that exploit the additional degrees of freedom %opened by 
possible from relaxing the exact solution criterion\revnext{~\cite{DemangePaschos1999}}. Moreover, both %relative 
problem complexity and algorithm assessment can change when different metrics are considered. To our knowledge these alternative metrics have not yet been %broadly 
widely considered in the quantum optimization literature. 

Concretely, we %study 
elucidate this approximation complexity distinction for quantum approaches to MaxCut, a prototypical NP-hard optimization problem commonly considered for quantum algorithms. \revlatest{We instead assess in terms of the achievable performance gain above random assignments.} For MaxCut, its classical approximation theory in terms of gain is indeed distinct from that for the ordinary approximation ratio~\cite{HastadVenkatesh2004,KhotODonnell2009}. %\revlatest{The published classical hardness gap translates this distinction into a barrier to efficient state preparation.}
We leverage the classical hardness gap to 
% this distinction 
\revcut{translate into an achievable gain barrier to efficient quantum state preparation}\revnext{translate this distinction into a barrier to efficient quantum state preparation}.

We first consider the standard uncompressed (Ising spin) encoding, denoted by the case $d=1$, where each sample (computational basis measurement) directly returns a candidate cut. 
This encoding is used widely and across a variety of problems, from the quantum alternating operator ansatz (QAOA)~\cite{Farhi2014,Hadfield2019,HadfieldHamiltonians2021} to quantum annealing~\cite{Kadowaki1998,FarhiAdiabatic2000,lucas2014ising},
among many others. 
We then %consider 
%generalize 
extend to compressed encodings of quantum random access optimization (QRAO), where each qubit stores $d=2$ or $3$ classical variables~\cite{Fuller2024,Teramoto2023}. In QRAO, the input problem and compression choices determine the relaxed quantum Hamiltonian~$H_d$ to optimize. Decoding via local rounding transfers its centered (i.e., relative to the random guessing mean) energy, which we call the \textit{encoded energy gain}, to the \textit{decoded problem gain} by an exact multiplicative factor for arbitrary input states, including entangled states~\cite{Fuller2024}. We use this identity together with the classical gain gap to constrain uniform QRAO state preparation. %Consequently, a QRAO energy guarantee may impose a strong classical-gain target even when the corresponding ordinary approximation-ratio guarantee appears weak.
Consequently, a QRAO energy guarantee may impose a strong gain requirement or obstruction even when
the corresponding ordinary approximation ratio appears relatively weak.

For $d=1$, the classical MaxCut-Gain gap directly yields the BQP consequence stated below~\cite{Dinur2025,KhotMinzerSafra2023}. %The main 
A primary additional question is whether compression weakens this barrier. Previously, Teramoto et al.\ \revcut{identified as future work to characterize the hardness of finding a relaxed state whose energy exceeds the classical MaxCut value, together with some discussion of performance in terms of MaxCut-Gain}\revnext{identified characterizing the hardness of finding a relaxed state whose energy exceeds the classical MaxCut value, together with performance in terms of MaxCut-Gain, as future work}~\cite[Secs.~4.2.2--4.2.3]{Teramoto2023}. We answer this question for both the two-to-one and three-to-one QRAO encodings for every fixed target energy $E_\rho$ satisfying 
$E_\rho-E_0\ge\tau(\Cstar-E_0)$, 
with $\Cstar$ the optimal cut value, $E_0$ the expected cut value under uniform random guessing, and 
%$E_\rho-W/2\ge\tau(\Cstar-W/2)$ with 
any constant $0<\tau\le1$. 
%Above the %classical optimal threshold 
%optimal cut value target corresponding to $\tau=1$, 
\revcut{The barrier also applies to every possible compressed energy threshold $\lambda_{\max}(H_d)\geq E_\rho\geq\Cstar$ above the optimal cut value energy target.}\revnext{The barrier also applies throughout the compressed-energy range $\Cstar\le E_\rho\le\lambda_{\max}(H_d)$.} %corresponding to $\tau=1$. . 
%\revnew{The compressed proof composes the classical gap with the known decoder and an explicit mean-to-tail step. Its new content is the gain accounting, end-to-end preparation consequence, and near-tight family, not a new classical gap, QMA-hardness result, or quantum hardness-of-approximation technique.}
%
Our main theorem concerns what energy values can be efficiently obtained, 
and is complementary but distinct from recent work~\cite{HadfieldQRAOComplexity2026} that identified NP-, StoqMA-, and QMA-complete QRAO energy promise problems
concerning the relaxed optimum $\lambda_{\max}(H_d)$. A separate companion study of compression, observable magnitude margins, and readout cost~\cite{HadfieldNFC2026} uses information-theoretic bounds rather than the MaxCut-Gain gap.

\begin{informaltheorem}
%For the standard uncompressed Ising encoding,
\revnext{For the standard uncompressed encoding, }suppose there exists a uniformly efficient quantum or hybrid algorithm that for every MaxCut instance and with inverse-polynomial success probability prepares a state whose computational basis measurement recovers in expectation any fixed positive fraction of the optimal classical gain above random. This implies $\NP\subseteq\BQP$, and hence no such algorithm can exist under standard \revcut{complexity theoretic}\revnext{complexity-theoretic} assumptions. 
We prove the result holds for standard uncompressed Ising encodings, as well as for two-to-one and three-to-one QRAO encodings when the prepared state's encoded energy gain is measured against the optimal classical gain of the input graph. 
\end{informaltheorem}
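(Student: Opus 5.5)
We reduce to the classical MaxCut-Gain hardness gap: it is $\NP$-hard to distinguish graphs whose optimal gain $\Cstar-E_0$ is at least a constant fraction $c$ of the edge weight $W$ from those with gain at most $s\,W$, for some fixed $s<\tau c$. This is the Dinur/Khot--Minzer--Safra gap, with the standard random-cut baseline $E_0=W/2$, and it may be amplified by standard gap manipulation so that $s/c$ is below any fixed $\tau>0$. We will use this gap as a black box, so the whole task is to show that a uniform preparer meeting the $\tau$-fraction guarantee yields a $\BQP$ algorithm deciding the gap problem.

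\textbf{Step 1 ($d=1$).} Suppose an efficient procedure $\mathcal A$, given any instance, prepares with probability $p\ge 1/\poly(n)$ a state $\rho$ satisfying $\E_{z\sim\rho}[C(z)]-E_0\ge\tau(\Cstar-E_0)$. The algorithm $\mathcal B$ runs $\mathcal A$ polynomially many times, measures in the computational basis, and evaluates each sampled cut classically. It outputs YES if some sampled cut has gain exceeding $sW$. On NO instances every cut has gain at most $sW$, so $\mathcal B$ never errs there; soundness is therefore automatic.

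\textbf{Step 2: the mean-to-tail step (main obstacle).} On a YES instance, a successful run has mean gain at least $\tau cW$. The cut gain $C(z)-E_0$ lies in $[-W/2,W/2]$, which is bounded. A reverse-Markov bound then gives, per successful run, probability at least $(\tau c - s)/(1/2 - s)=\Omega(1)$ of a sample with gain greater than $sW$. Combined with $p\ge1/\poly$, this shows $O(\poly(n))$ repetitions succeed with probability $\ge 2/3$. The requirement that the constant $\tau c-s$ be positive is exactly where the target $\tau$ must exceed the soundness ratio $s/c$. If the published gap is only for a specific ratio, we first amplify it, for example by disjoint unions or gain-preserving tensoring, so that $s/c<\tau$; this is the delicate point. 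This gives a one-sided-error $\BQP$ algorithm for an $\NP$-hard promise problem, hence $\NP\subseteq\BQP$.

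\textbf{Step 3 ($d\in\{2,3\}$ QRAO).} Take the guarantee to be $E_\rho-E_0\ge\tau(\Cstar-E_0)$ for the relaxed Hamiltonian $H_d$, noting that the encoded energy baseline $\Tr(H_d)/2^{n}$ equals $E_0$. Apply the known local-rounding decoder of Fuller et al. By the exact transfer identity stated in the abstract, the decoded mean gain equals $(E_\rho-E_0)/d^2\ge(\tau/d^2)(\Cstar-E_0)$ for arbitrary, even entangled, $\rho$. This is the $d=1$ hypothesis with $\tau$ replaced by $\tau/d^2$, still a fixed positive constant, so Steps 1--2 apply verbatim to the decoded cuts.

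\textbf{Step 4: the energy range.} Any $E_\rho\in[\Cstar,\lambda_{\max}(H_d)]$ satisfies the hypothesis with $\tau=1$, so the barrier covers this entire range. Because the decoder and the cut evaluation are efficient classical post-processing, uniformity and polynomial cost are preserved throughout.
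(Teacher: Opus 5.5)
Your reduction matches the paper's. A reverse-Markov bound turns the conditional mean guarantee into a constant probability that a sampled cut's gain exceeds the NO bound. NO instances can never produce such a cut, so neither a success flag nor knowledge of $\Cstar$ is needed. Polynomially many repetitions absorb the inverse-polynomial success probability. For $d\in\{2,3\}$, the exact $1/d^2$ decoder identity reduces everything to the $d=1$ argument with $\tau$ replaced by $\tau/d^2$. Your Step~4 is also how the paper covers $\Cstar\le E_\rho\le\lambda_{\max}(H_d)$.

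What needs correcting is the step you call delicate. No amplification is needed, and none of the kind you suggest is available.

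\textbf{The gap already has vanishing ratio.} The Dinur/Khot--Minzer--Safra gap is a one-parameter family: for every sufficiently small $\varepsilon>0$ it is NP-hard to separate $\Cstar/W\ge\tfrac12+\varepsilon$ from $\Cstar/W\le\tfrac12+\eta_\varepsilon$, where $\eta_\varepsilon=c_1\varepsilon/\log(1/\varepsilon)$. Since $\eta_\varepsilon/\varepsilon\to0$, the paper simply fixes $\varepsilon$ with $\eta_\varepsilon<\tau\varepsilon/(2d^2)$ for the given $(d,\tau)$. This is why the hard family is allowed to depend on $(d,\tau)$.

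\textbf{Your fallback routes would fail.} Disjoint unions leave $\Cstar/W$ unchanged. In a graph tensor product, the all-equal assignment on one factor already reproduces the other factor's normalized gain, so the soundness side does not shrink. More fundamentally, suppose a polynomial-size amplification took a gap with soundness-to-completeness ratio $r<1$ at completeness $\varepsilon$ to ratio $r^k$ at completeness $\varepsilon^k$. For a large fixed $k$, the Charikar--Wirth algorithm (gain $\Omega(\varepsilon W/\log(1/\varepsilon))$) would then separate the two cases, so the problem could not remain NP-hard unless NP is efficiently solvable.

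Your proof is therefore correct only because the cited gap already has a vanishing ratio. That, not the mean-to-tail step, is where the hardness content lives.

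A minor point: $\eta_\varepsilon$ need not be rational, so compare against a rational threshold strictly between $sW$ and $\tau cW$ rather than against $sW$ itself.
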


This \revcut{state preparation}\revnext{state-preparation} barrier concerns broad classes of quantum algorithms including those based on quantum circuits, variational approaches, annealing, open-system, measurement-based, hybrid methods, and many others whose end-to-end implementations %---including control selection, preparation, readout, decoding, and repetitions---
are uniformly realizable to required accuracy by polynomial-size adaptive circuits (i.e., polynomial-time classical feedforward). 
\revnext{Furthermore the barrier is independent of hardware regime. It applies to noisy intermediate-scale quantum (NISQ), early fault-tolerant, and fully fault-tolerant approaches~\cite{Preskill2018, Sanders2020,HeEFTQC2024} whenever the complete implementation meets the same uniform-efficiency, accuracy, and success requirements.}
Here efficient and uniform implementation is critical and concerns all algorithm aspects including state preparation, parameter setting, control and precision, measurement readout, decoding and classical processing, success probability and required repetitions, and other factors. 
%\revcut{The result does not claim to cover all possible schemes, in particular leaving open effects of solution advice, nonuniform or exponentially specified controls, as well as guarantees over structured or typical instances, or claims re higher moments and tails of acheivable output distribtions.}
\revnext{The result does not cover every possible scheme. In particular, it leaves open the effects of supplied solution advice and nonuniform or exponentially specified controls, as well as guarantees restricted to structured or typical instances, or refined claims about tails and higher moments %or tails 
of achievable output distributions.}
\revlatest{Definition~\ref{def:operational} below gives the precise scope, and Theorem~\ref{thm:prep} gives the formal statement.}

%Compression also makes a structural distinction visible. The classical threshold \(E_\rho=C^\star\) can lie strictly below the maximum quantum energy \(\lambda_{\max}(H_d)\).
Compression also reveals a structural distinction. 
The optimal classical energy threshold $\Cstar$ can lie strictly below the
relaxed quantum optimum $\lambda_{\max}(H_d)$. \revnext{Proposition}~\ref{cor:near} constructs a family of hard instances on which the preparation
barrier persists even though the relaxed quantum optimum exceeds the
classical MaxCut optimum only by a relative $\Theta(1/n)$ factor.
Yet on the same family, the maximally mixed quantum state has energy within a
relative $\Theta(1/n)$ factor of the quantum optimum, but has zero encoded
energy gain and zero decoded mean gain.
Thus neither a nearly tight relaxation nor a \revnext{near-unity} energy
approximation ratio certifies %access to the gain required by the theorem.
the %classically normalized 
encoded energy gain required by the theorem\revnext{.}
We note that this target differs from multiplicative approximations to the
product state or unrestricted optima of the closely \revnext{related} Quantum
MaxCut problem~\cite{Piddock2025,Heilman2026}.

\begin{figure}[t]
\centering
\begin{tikzpicture}[
 stage/.style={draw,rounded corners,align=center,text width=0.88\columnwidth,inner sep=2.6pt,font=\footnotesize},
 inside/.style={draw,rounded corners,align=center,text width=0.88\columnwidth,inner sep=3.2pt,font=\scriptsize,fill=black!4},
 flow/.style={-{Triangle[length=2.2mm,width=2.5mm]},line width=0.7pt,shorten <=0.3mm,shorten >=0.3mm},
 node distance=3.2mm]
 \node[stage] (gap) {MaxCut instance and chosen encoding};
 \node[inside,below=of gap] (prep) {\textbf{Efficient uniform quantum or hybrid algorithm}\\All input-dependent classical work, state preparation, measurements, and polynomially many restarts count.};
 \node[stage,below=of prep] (state) {Suppose for some fixed $0<\tau\le1$ with inverse-polynomial\\ success probability the output state $\rho$ has encoded energy %gain
 \\$E_\rho-W/2\ge\tau(C^\star-W/2)$};
 \node[stage,below=of state] (decision) {Decode, evaluate, and repeat. This distinguishes with\\ bounded error whether \revnext{the} classical gain \revnext{is} large or small\\
%Bounded-error decision of the NP-hard MaxCut-Gain gap\\
$\Longrightarrow\ {\NP\subseteq\BQP}$\\
 Therefore no such %uniformly 
 efficient %algorithm 
 procedure \revnext{is} believed to exist.
%---Decode each output, evaluate its cut, and repeat\\This would distinguish with bounded error whether the\\optimal classical gain is large or small, an NP-hard task\\Therefore $\NP\subseteq\BQP$
};
 \draw[flow] (gap) -- (prep);
 \draw[flow] (prep) -- (state);
 \draw[flow] (state) -- (decision);
\end{tikzpicture}
\caption{%How an all-instance state-preparation guarantee yields the
Complexity consequences of worst-case state preparation performance guarantees for MaxCut-Gain. Here $W$ is the total edge weight,
$C^\star-W/2$ is the optimal classical gain, and
$E_\rho-W/2$ is the encoded energy gain of the prepared quantum state. All
input-dependent preprocessing, %control, 
training, preparation, measurement,
feedback, and repetition count toward the cost. Local decoding and
polynomial repetition would decide the NP-hard MaxCut-Gain gap in
bounded-error quantum polynomial time, implying
$\NP\subseteq\BQP$\revnext{,} which is believed false.
%\revcut{Algorithms involving supplied solution advice, nonuniform or exponentially long controls,}
%\revcut{postselection on exponentially rare events, and claims limited to restricted or typical instances}
%\revcut{fall outside the premise.}
} 
%[prev]From a state-preparation guarantee on every input to the complexity consequence. \revlatest{Here $W$ is the total edge weight and $C^\star$ is the optimal cut value.} Thus $C^\star-W/2$ is the input graph's optimal classical gain, whereas $E_\rho-W/2$ is the prepared state's encoded energy gain. Every resource computed from the input---including seeds, parameters, gauge choices, feedback, measurements, and restarts---is part of the algorithmic cost. If the complete method reaches this encoded target on every valid input with inverse-polynomial success, local decoding and polynomial repetition would give a bounded-error quantum algorithm for the NP-hard task of distinguishing graphs with large versus small optimal classical gain. This would imply $\NP\subseteq\BQP$. Supplied solution advice, input-specific or exponentially long control descriptions, exponentially rare favorable events, and results only on restricted or empirical instance sets do not meet that premise.}
\label{fig:pipeline}
\end{figure}

Graph structure also matters. If the maximum degree of the problem graph is at most $D$, a deterministic classical construction recovers at least $1/(2D-1)$ of the optimal classical gain, so the barrier does not extend unchanged to fixed-degree promise classes~\cite{GutinYeo2023}. Regularity alone does not make MaxCut easy (it remains APX-complete on cubic graphs~\cite{AlimontiKann2000}), but %the present 
our theorem does not establish a preparation threshold for that restriction. Bipartite and bounded-treewidth instances admit efficient exact classical solutions, while our estimate shows that the near-unity mixed state ratio in \revnext{Proposition}~\ref{cor:near} requires $D=\Omega(n)$.

The decoded gain %\revcut{approximation metric} 
\revnext{ratio} metric captures improvement above the random baseline that ordinary %and energy 
approximation ratios can obscure, 
while also providing a generic performance barrier to quantum optimization algorithms. QRAO compression does not remove the barrier, even with vanishing relaxation excess.
%\revlatest{The all-instance barrier already holds in the uncompressed encoding and concerns computational selection, not entanglement. A suitable product state exists for every input.} QRAO compression does not remove the barrier, even with vanishing relaxation excess and a mixed state that is nearly optimal in energy ratio but has zero encoded and decoded mean gain. 
%\revcut{Our results crystalize that quantum optimization algorithm analysis and assessment should therefore consider, report, a compare along multiple dimensions including decoded gain, target reaching probability, readout, precision, and complete end-to-end cost, in addition to standard approximation ratio bounds.}
\revnext{Our results demonstrate that quantum optimization algorithms should be assessed along multiple dimensions. Performance measures such as decoded gain, ordinary approximation ratio, and target-reaching probability should be reported together with training cost, readout and sampling overhead, precision, fault-tolerance overhead, and total end-to-end cost. A clearer understanding of these complexity frontiers can support more informative benchmarks and clarify where quantum advantage is possible.}

The remainder of the paper is structured as follows. 
\revnext{We define the encodings and approximation metrics considered in Sec.~\ref{sec:definitions}. We prove the preparation barrier in Sec.~\ref{sec:preparation} and elucidate %the uncompressed case and 
its algorithmic scope in Sec.~\ref{sec:uncompressed}. We specialize to compressed encodings in Sec.~\ref{sec:compressed}. Finally, we discuss implications and limitations of our results in Sec.~\ref{sec:discussion-shortest}. Technical details appear in Apps.~\ref{app:decoding}, \ref{app:auxiliary}, and \ref{app:padding}.}

% ===== END shortest/introduction.tex =====

\section{Encodings and approximation}\label{sec:definitions}

Table~\ref{tab:measures} distinguishes \revnext{six} performance measures used below. \revnext{They are complementary metrics for algorithm assessment and benchmarking, not interchangeable scores.}

\begin{table*}[t]
\caption{Approximate optimization performance measures used in this work. For a MaxCut instance, $W$ is the total edge weight and $\gstar=\Cstar-W/2$ is the optimal classical gain above a random cut for the input graph. Output expectation values and probabilities include preparation, measurement, decoding, and classical randomness. ``Every input'' refers to a uniform procedure with polynomial end-to-end cost, see Defn.~\ref{def:operational}.  State preparation claims also use the inverse-polynomial operational success condition in Theorem~\ref{thm:prep}.}
\label{tab:measures}
\footnotesize
\begin{tabular*}{\textwidth}{@{\extracolsep{\fill}}p{0.145\textwidth}p{0.195\textwidth}p{0.185\textwidth}p{0.25\textwidth}p{0.125\textwidth}@{}}
\toprule
\raggedright Measure & \raggedright What it reports & \raggedright What it can obscure & \raggedright Complexity statement & \raggedright Main location \tabularnewline
\midrule
\raggedright \revnext{Ordinary approximation ratio} $R=\overline C/\Cstar$ & \raggedright Mean decoded value relative to the classical optimum & \raggedright Random baseline can make $R$ close to one even when no %decoded gain is 
gain recovered & \raggedright MaxCut has no PTAS unless $\mathrm P=\NP$. A quantum PTAS would imply $\NP\subseteq\BQP$ & \raggedright Eqs.~\eqref{eq:gainratio}--\eqref{eq:reversegain}, Prop.~\ref{prop:ordinary-ratio}. \tabularnewline
\addlinespace
\raggedright \revnext{Energy approximation ratio $\Renergy{\rho}=E_\rho/\lambda_{\max}(H_d)$} & \raggedright Mean encoded energy relative to the unrestricted quantum optimum & \raggedright Depends on Hamiltonian offset and need not certify decoded gain & \raggedright Prop.~\ref{cor:near} gives a hard family whose zero-gain maximally mixed state has ratio $1-\Theta(1/n)$. Uniform fixed-fraction gain recovery on this family would imply $\NP\subseteq\BQP$.
& \raggedright Eq.~\eqref{eq:vanish} \tabularnewline
\addlinespace
\raggedright Encoded gain ratio \revnext{$\Renc{\rho}=(E_\rho-W/2)/\gstar$} & \raggedright Quantum state energy above $W/2$, normalized by the %input graph's 
optimal classical gain & \raggedright For compressed encodings it is a relaxation value, can exceed one, and is not itself a feasible cut & \raggedright Theorem~\ref{thm:prep} excludes a fixed positive worst-case (all-input) lower bound unless $\NP\subseteq\BQP$ & \raggedright Eqs.~\eqref{eq:encoded-gains}, \eqref{eq:target} \tabularnewline
\addlinespace
\raggedright Decoded gain ratio \revnext{$\Rdec{\rho}=(\overline C-W/2)/\gstar$} & \raggedright Fraction of the optimal improvement above a random cut recovered by the decoded mean & \raggedright A mean \revnext{value} does not show how probability is distributed among individual outputs & \raggedright Theorem~\ref{thm:prep} and Eq.~\eqref{eq:round} exclude %all-input 
state preparation yielding worst-case fixed positive decoded gain unless $\NP\subseteq\BQP$ & \raggedright Eqs.~\eqref{eq:decoded-gain}, \eqref{eq:gain-chain} \tabularnewline
\addlinespace
\raggedright \revnext{Relaxed-optimum gain ratio $\Rq{\rho}=(E_\rho-W/2)/\gqstar$} & \raggedright \revnext{Fraction of the centered relaxed quantum optimum attained by the state} & \raggedright \revnext{Need not certify a feasible decoded cut or its gain} & \raggedright \revnext{Any fixed positive all-input lower bound implies the target of Theorem~\ref{thm:prep}, since $\gqstar\ge\gstar$} & \raggedright \revnext{Eqs.~\eqref{eq:quantum-gain-fraction}--\eqref{eq:gain-chain}} \tabularnewline
\addlinespace
\raggedright Tail probability $p_\alpha=\Pr[C_G(\mathbf Z)\ge W/2+\alpha\gstar]$ & \raggedright Single-run probability of recovering at least an $\alpha$ fraction of the optimal classical gain & \raggedright Doesn't quantify overshoot, %exponentially 
small tails are not operationally accessible & \raggedright Corollary~\ref{cor:tail} excludes fixed $\alpha>0$ along with inverse-polynomial $p_\alpha$ %on every input 
unless $\NP\subseteq\BQP$ & \raggedright Eq.~\eqref{eq:cantelli}, Cor.~\ref{cor:tail} \tabularnewline
\bottomrule
\end{tabular*}
\end{table*}

% ===== BEGIN stateonly/reordered/model.tex =====
\subsection{MaxCut and its encodings}\label{sec:model}
%For a %finite 
Given a simple graph $G=(V,E,w)$ with positive rational edge weights, let $W=\sum_{ij\in E}w_{ij}>0$. Identify vertex partitions with bit (spin) string assignments $z\in\{-1,1\}^{|V|}$, with cut value %and the optimal cut value by
\begin{equation}
 C_G(z)=\sum_{ij\in E}\frac{w_{ij}}2(1-z_iz_j),
 %\qquad
% \Cstar=\max_{z\in\{-1,1\}^{|V|}} C_G(z).
 \label{eq:classical}
\end{equation}
and maximum cut value $\Cstar:=\max_{z\in\{-1,1\}^{|V|}} C_G(z)$.
A uniformly random classical assignment cuts each edge with probability~$1/2$ and so has expected cut value $E_0=W/2$. Define the gain of an assignment $z$ relative to the random baseline by
\begin{equation}
 g(z)=C_G(z)-\frac{W}{2},
 \label{eq:gain}
\end{equation}
with optimal possible gain %above the random baseline
\begin{equation}
 \gstar=\max_{z\in\{-1,1\}^{|V|}}g(z)=\Cstar-\frac{W}{2}.
 \label{eq:optimal-gain}
\end{equation}
\revlatest{We refer to the subtraction of $E_0$ as \emph{centering}. It shifts the random assignment baseline to zero without changing which assignments maximize the objective or their relative rankings.} Thus the gain $g(z)$ is the improvement of assignment $z$ over the random-assignment mean, and $\gstar$ is the maximum improvement available on the instance. 
As $W>0$, some cut has value strictly greater than the random mean, so $\gstar>0$ always.
\revlatest{Analyzing approximation in terms of this centered quantity assesses what fraction of that available improvement an algorithm recovers. The ordinary approximation ratio instead assesses %the returned value 
solution quality as a fraction of the uncentered optimum $\Cstar$, and is not invariant under linear shifts of the objective function like gain.} 
\revnext{More generally, suitable centering provides a common language for maximization and minimization problems and for positive or signed objectives, settings that require separate conventions or complications for ordinary approximation ratios.}

For encoding on quantum devices, the standard uncompressed Ising representation~\cite{HadfieldHamiltonians2021,lucas2014ising} uses one qubit per vertex and assigns the Pauli observable $P_i=Z_i$ to vertex $i$. \revlatest{We call this the $d=1$ case.} %To extend the same theorem to 
For QRAO compression, we also allow $d\in\{2,3\}$. A valid packing assigns each vertex $i$ to a single-qubit Pauli observable $P_i$, using distinct Pauli axes $X,Y,Z$ for at most $d$ variables on each qubit, and maps adjacent vertices to different qubits. In either encoding, for any density operator $\rho$ define
\begin{equation}
 H_d=\sum_{(ij)\in E}\frac{w_{ij}}2(\Id-dP_iP_j),\qquad
 E_\rho=\Tr(\rho H_d).
 \label{eq:ham}
\end{equation}
Here $\Id$ is the identity operator on the compressed register and $\Tr$ denotes the trace, so $E_\rho$ is the mean energy of quantum state~$\rho$.

Let $\lambda_{\max}(H_d)$ denote the largest eigenvalue of $H_d$. As each Pauli operator product is traceless, the maximally mixed state $I/2^n$, \revnext{where $n$ is the number of qubits,} has mean energy $W/2$, matching the uniformly random baseline.
\revlatest{Using the same baseline as the classical objective, define the \emph{encoded energy gain} of a state and the \emph{relaxed quantum optimum gain} by}
\begin{equation}
 \genc{\rho}=E_\rho-\frac W2,
 \qquad
 \gqstar=\lambda_{\max}(H_d)-\frac W2.
 \label{eq:encoded-gains}
\end{equation}
\revlatest{These are centered Hamiltonian quantities. For $d=1$, the basis state $|z\rangle$ has energy $C_G(z)$, and $\genc{\rho}$ is exactly the mean classical gain obtained by computational basis measurements, with $\gqstar=\gstar$. For $d\in\{2,3\}$, an encoded energy gain is not itself the gain of a feasible cut and may exceed $\gstar$, with }
%For $d=1$, the basis state $|z\rangle$ has energy $C_G(z)$. 
%For $d\in\{2,3\}$, 
tensor products of single-qubit quantum random access code (QRAC) states having %the same energy
energy exactly $C_G(z)$~\cite{Fuller2024}. Hence we have $\lambda_{\max}(H_d)\ge\Cstar$, or equivalently $\gqstar\ge\gstar$, in all three cases.
\revnext{The encoded gain ratio reported in Table~\ref{tab:measures} is $\Renc{\rho}:=\genc{\rho}/\gstar$.}

For a fixed input state $\rho$, let $D_d(\rho)$ denote the random assignment returned by the known local magic-state rounding decoder for $d\in\{2,3\}$~\cite{Fuller2024}, or by computational-basis sampling for $d=1$. \revlatest{Define its \emph{decoded mean gain} and relate it to the encoded energy gain by}
\begin{equation}
 \gdec{\rho}:=\E[C_G(D_d(\rho))]-\frac W2
 =\frac{\genc{\rho}}{d^2}.
 \label{eq:round}
\end{equation}
Here the expectation is taken over the quantum measurement outcomes and any classical randomness used by the decoder. For $d=1$ this %is simply 
reproduces known equality between quantum state energy and the mean sampled cut. \revnext{App.}~\ref{app:decoding} gives the local measurement details for all three encoding settings.
\revlatest{Thus a classically normalized 
encoded performance target $\genc{\rho}\ge\tau\gstar$ yields decoded mean gain $\gdec{\rho}\ge\tau\gstar/d^2$. The normalizer $\gstar$ %belongs to 
comes from the original MaxCut input in both expressions. Below, any unqualified use of ``gain'' refers to a classical assignment or decoded output; Hamiltonian quantities are referred to as encoded energy gain or relaxed quantum optimum gain.}

% ===== END stateonly/reordered/model.tex =====

% ===== BEGIN stateonly/reordered/scales.tex =====
\subsectionview{scales}{Gain versus \revnext{ordinary approximation ratio}}
Consider a uniform quantum or hybrid optimization algorithm, as we formalize shortly in Definition~\ref{def:operational} below. 
Let $\mathbf Z$ denote the random assignment returned by the complete algorithm and readout procedure, and let $z$ denote one sample realization. Define the mean returned value and the recovered fraction of the available gain by
\begin{equation}
 \overline C=\E[C_G(\mathbf Z)],
 \qquad
 \revnext{\Rdec{\rho}}=\frac{\overline C-W/2}{\gstar}
 =\frac{\E[g(\mathbf Z)]}{\gstar}.
 \label{eq:decoded-gain}
\end{equation}
Unless stated otherwise, expectations and probabilities involving $\mathbf Z$ are over all randomness in the end-to-end procedure, including classical random choices, randomized state preparation, quantum measurements, and decoding. We call \revnext{$\Rdec{\rho}$ the \emph{decoded gain ratio} (fraction)}.  

Also set $c=\Cstar/W>1/2$ and $R=\overline C/\Cstar$, where $R$ is the ordinary expected approximation ratio of the decoded output. When state energy is compared with the relaxed quantum optimum, we call \revnext{$\Renergy{\rho}:=E_\rho/\lambda_{\max}(H_d)$} the energy approximation ratio. \revlatest{The centered analogue is the \revnext{relaxed-optimum gain ratio}}
\begin{equation}
 \revnext{\Rq{\rho}}=\frac{\genc{\rho}}{\gqstar}.
 \label{eq:quantum-gain-fraction}
\end{equation}
\revlatest{Unlike the theorem's target $\tau$, which is normalized by the classical optimum gain $\gstar$, \revnext{$\Rq{\rho}$} is normalized by the relaxed quantum optimum gain $\gqstar$. For a fixed state followed by decoding $D_d$, the three normalized gain measures are related by}
\begin{equation}
 \revnext{\Rdec{\rho}}=\frac{\gdec{\rho}}{\gstar}
 =\frac{1}{d^2}\frac{\genc{\rho}}{\gstar}
 =\frac{\revnext{\Rq{\rho}}}{d^2}\frac{\gqstar}{\gstar}.
 \label{eq:gain-chain}
\end{equation}
\revlatest{Thus the encoded target $\genc{\rho}/\gstar\ge\tau$ of the theorem guarantees \revnext{decoded gain ratio $\Rdec{\rho}\ge\tau/d^2$}. It is not the same as requiring \revnext{$\Rq{\rho}\ge\tau$}, although the latter is at least as strong because $\gqstar\ge\gstar$.} \revcut{The exact conversion between decoded gain ratio and ordinary approximation ratio is shown below. If $\Rdec{\rho}\ge\alpha$ for $0\le\alpha\le1$, then}\revnext{The ordinary approximation ratio includes the random-cut baseline, so a decoded gain bound also bounds $R$. If $\Rdec{\rho}\ge\alpha$ for $0\le\alpha\le1$, then}
\begin{equation}
 R=\revnext{\Rdec{\rho}}+\frac{1-\revnext{\Rdec{\rho}}}{2c} \;\ge\frac{1+\alpha}{2}.
 \label{eq:gainratio}
\end{equation}
\revcut{with the inequality following from $0\le\alpha\le1$ and $\Rdec{\rho}\ge\alpha$.}\revcut{The inequality uses $c\le1$, $\Rdec{\rho}\le1$, and $\Rdec{\rho}\ge\alpha$.}\revnext{Because $c\le1$ and $\Rdec{\rho}\le1$, the exact expression for $R$ is at least $(1+\Rdec{\rho})/2$. The assumed gain bound then gives the inequality.}
%gives 
%\begin{equation}
% R\ge\frac{1+\alpha}{2}.
% \label{eq:gainratio2}
%\end{equation}

Conversely, an $a$-approximation ($0<a\le1$), combined with a classical cut of value at least $W/2$, certifies only
\begin{equation}
 \revnext{\Rdec{\rho}}\ge\max\!\left\{0,\frac{ac-1/2}{c-1/2}\right\}.
 \label{eq:reversegain}
\end{equation}
The classical Goemans--Williamson (GW) algorithm for MaxCut applies random-hyperplane rounding to a semidefinite relaxation and guarantees, in expectation over the random hyperplane, a cut of value at least $a_{\rm GW}\Cstar$, where
$a_{\rm GW}=\min_{0<\theta\le\pi}2\theta/[\pi(1-\cos\theta)]\approx0.878567$~\cite{Goemans1995}, where $\theta$ is the angle between the two %semidefinite-program 
real vectors associated with an edge. 
%For comparison, for every fixed $\delta>0$, achieving an ordinary approximation ratio of $16/17+\delta\approx0.941+\delta$ is NP-hard~\cite{Hastad2001}. Assuming the Unique Games Conjecture, the NP-hardness threshold tightens to $a_{\rm GW}+\delta$~\cite{KhotKindlerMosselODonnell2007}. 
%
Observe that Eq.~\eqref{eq:reversegain} gives no positive \revnext{decoded gain ratio} \emph{\revnext{from the ordinary approximation-ratio bound alone}} when $c\le1/(2a_{\rm GW})\approx0.569108$. Actual GW cuts may be better. On promises $c\ge0.6$, for example, the bound certifies \revnext{$\Rdec{\rho}\ge0.2714$}. A stronger classical gain analysis recovers $\Omega(1/\log(1/\varepsilon))$ of the available gain when $c=1/2+\varepsilon$~\cite{CharikarWirth2004,KhotODonnell2009}. Clearly this fraction can vanish as $\varepsilon$ approaches zero. %Our result does not determine the optimal ordinary approximation ratio.

\paragraph*{\revnext{Hardness from ordinary approximation ratio.}}
For MaxCut %the %worst-case approximation c 
its hardness of approximation in terms of \revnext{ordinary worst-case approximation ratio} is well characterized. 

\begin{proposition}[Standard ordinary-ratio barrier]
\label{prop:ordinary-ratio}
Let $a_{R}=16/17$. For any fixed
$0<\delta\le1-a_{R}$, suppose that a uniformly efficient
quantum or hybrid procedure outputs, on every unweighted MaxCut
instance, a cut $\mathbf Z$ satisfying
\[
 \mathbb E[C_G(\mathbf Z)]
 \ge (a_R+\delta)C^\star .
\]
Then $\NP\subseteq\BQP$. Assuming the Unique Games Conjecture,
the same conclusion holds with \revcut{$a_r$}\revnext{$a_R$} replaced by
$a_{\rm GW}\approx0.878567$
\cite{Hastad2001,KhotKindlerMosselODonnell2007}.
\end{proposition}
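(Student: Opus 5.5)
The plan is to reduce from the known classical gap problem to a bounded-error quantum decision procedure, using the assumed expected-value guarantee together with Markov-type amplification.

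First invoke H\aa stad's theorem~\cite{Hastad2001}. For every fixed $\delta'>0$ there is a polynomial-time reduction from an NP-complete language $L$ to unweighted MaxCut instances $G$ with thresholds $k_G$ computable in polynomial time. If $x\in L$ then $\Cstar\ge k_G$, and if $x\notin L$ then $\Cstar<(a_R+\delta')k_G$. We will choose $\delta'=\delta/2$. Under the Unique Games Conjecture, the same gap statement holds with $a_{\rm GW}$ in place of $a_R$~\cite{KhotKindlerMosselODonnell2007}, so the rest of the argument is identical in both cases.

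\textbf{Yes instances.} Given $x$, build $G$ and run the assumed procedure to obtain $\mathbf Z$. Then
\[
\E[C_G(\mathbf Z)]\ge(a_R+\delta)\Cstar\ge(a_R+\delta)k_G .
\]

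\textbf{Turning the mean into a tail bound.} This is the one step needing care. Because $0\le C_G\le \Cstar$, a reverse Markov argument applies to the nonnegative variable $\Cstar-C_G(\mathbf Z)$. It gives
\[
\Pr\bigl[C_G(\mathbf Z)\ge(a_R+\delta/2)\Cstar\bigr]\ge\frac{\delta/2}{1-a_R-\delta/2}\ge\frac{\delta}{2}.
\]
This is a constant, independent of the instance. On a yes instance the event implies $C_G(\mathbf Z)\ge(a_R+\delta/2)k_G$.

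\textbf{No instances.} Every cut satisfies $C_G(z)\le\Cstar<(a_R+\delta/2)k_G$, so the test never accepts.

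\textbf{The decision procedure.} The $\BQP$ algorithm repeats the procedure $O(1/\delta)$ times. It computes each output's cut value classically and accepts iff some run reaches $(a_R+\delta/2)k_G$. The error is one-sided, and the failure probability on yes instances is at most $1/3$. The total cost is polynomial because the procedure is uniformly efficient in the sense of Definition~\ref{def:operational}.

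The main obstacle is conceptual bookkeeping rather than mathematics. The hypothesis is only about expectations, so the reverse-Markov step is essential; it requires that $\delta$ be a fixed constant and that the cut value be bounded above by $\Cstar$. The polynomial-time computability of $k_G$ and the uniformity of the procedure must also be checked. With these in place, $L\in\BQP$, and hence $\NP\subseteq\BQP$.
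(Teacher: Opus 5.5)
Your proposal is correct and follows essentially the same route as the paper. The paper also applies reverse Markov to $C^\star-C_G(\mathbf Z)$, obtaining the constant probability $\frac{\delta/2}{1-a_R-\delta/2}$, and then repeats the procedure to contradict the H\aa stad (or, under the Unique Games Conjecture, KKMO) hardness; your explicit threshold test against $(a_R+\delta/2)k_G$ merely spells out the step the paper phrases as ``a bounded-error $(r_0+\delta/2)$-approximation contradicts the NP-hardness result.''
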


\begin{proof}
Let $r_0$ denote the relevant hardness threshold. Since
$C_G(\mathbf Z)\le C^\star$,
\[
 \Pr\!\left[
 C_G(\mathbf Z)\ge(r_0+\delta/2)C^\star
 \right]
 \ge
 \frac{\delta/2}{1-r_0-\delta/2}.
\]
This probability is constant. Polynomial repetition and retention
of the best observed cut therefore give a bounded-error quantum
$(r_0+\delta/2)$-approximation, contradicting the corresponding
NP-hardness result unless $\NP\subseteq\BQP$.
\end{proof}
%The Proposition applies to both standard and compressed encodings.
%
Thus MaxCut has no (ordinary approximation ratio) polynomial-time approximation scheme (PTAS) unless $\mathrm{P}=\NP$. A bounded-error randomized or quantum PTAS would imply $\NP\subseteq\mathrm{BPP}$ or $\NP\subseteq\BQP$, respectively, where $\mathrm{BPP}$ denotes the class of bounded-error probabilistic polynomial time problems. Under the Unique Games Conjecture, the GW algorithm is an optimal constant-factor approximation algorithm for MaxCut in terms of worst-case performance.

Summarizing, the different notions of gain complement, rather than replace, the ordinary approximation ratio in assessing quantum optimization algorithms. %The baseline and readout must remain explicit.
Using gain we now state our general state preparation barrier.

\section{State preparation barrier from gain}\label{sec:preparation}

% ===== BEGIN shared/preparation.tex =====
Corollary~1.13 of Ref.~\cite{Dinur2025} states the following complexity gap assuming its Hypothesis~3.6, %The final article records that Khot, Minzer, and Safra subsequently proved that hypothesis~\cite{KhotMinzerSafra2023}.} 
which was subsequently proven in Ref.~\cite{KhotMinzerSafra2023}. 
Combining the two gives an unconditional statement: there exists %an absolute 
a universal constant $c_1>0$ such that, for every fixed sufficiently small %fixed 
$\varepsilon>0$, for unweighted graphs it is NP-hard to distinguish between cases 
\begin{equation}
 \begin{aligned}
 \text{YES:}\quad &\Cstar/W\ge\tfrac12+\varepsilon,\\
 \text{NO:}\quad &\Cstar/W\le\tfrac12+\eta_\varepsilon,
 \end{aligned}
 \label{eq:gap}
\end{equation}
with $\eta_\varepsilon=c_1\varepsilon/\log(1/\varepsilon)$ and here $W=|E|$. The cited corollary measures each cut as a fraction of the total number of edges $C_G(z)/W$, and maximizing over $z$ gives Eq.~\eqref{eq:gap}; parallel edges, if any, can be merged into positive integer weights. Its polynomial-time reduction therefore supplies instances in our input class. The YES and NO cases have optimal gain at least $\varepsilon W$ and at most $\eta_\varepsilon W$, respectively. 
%The ratio $\eta_\varepsilon/\varepsilon=c_1/\log(1/\varepsilon)$ tends to zero. 
This separation gives the uncompressed result directly as a quantum algorithm producing a fixed fraction of optimal gain can be used to  decide this problem.  For the compressed encodings, choosing $\varepsilon$ sufficiently small leaves the YES mean above every possible NO output even after the decoder loses a fixed factor.

\begin{definition}[Uniform operational preparation]
\label{def:operational}
The input $I$ specifies the graph and a valid packing, and $N=|I|$ is its bit length. A uniform operational preparation procedure has a polynomial-time classical controller that, from~$I$ and without solution advice, specifies a polynomial-size adaptive quantum circuit over a fixed universal gate set. It may interleave polynomial-time classical computation, randomness, quantum gates, and measurements, and it returns a quantum register together with a finite classical record $\mathcal R$. All preprocessing, control and parameter selection, state generation, and repetitions count toward the cost. Here adaptive means that later gates, measurements, and
polynomial-time classical control may depend on the input and on
earlier classical randomness and measurement outcomes. %It does not permit solution advice or nonuniform or exponentially long control descriptions.

The procedure succeeds with probability at least $s(N)$ if some event $\mathcal E$ determined by $\mathcal R$ has $\Pr(\mathcal E)\ge s(N)$ and the conditional output state $\rho_{\mathcal E}$ satisfies the stated target. 
This probability is over all classical randomness and quantum measurement outcomes generated by the procedure. %An arbitrary ensemble decomposition of a density operator does not define such an event. 
The event $\mathcal E$ must correspond to a set of actual run
histories recorded in $\mathcal R$, and cannot be introduced merely
by choosing a favorable ensemble decomposition of the output density
operator. %The procedure need not identify the histories in
%$\mathcal E$ as successful. The reduction decodes every output and
%checks the resulting cut value, so these histories contribute their
%probability $\Pr(\mathcal E)$ without postselection.
%
%A success flag is unnecessary because the reduction in the proof of Theorem~\ref{thm:prep} never conditions on $\mathcal E$. 
\end{definition}
The argument of Proposition~\ref{prop:ordinary-ratio} also applies
when its mean guarantee holds conditionally on an operational event
$\mathcal E$ of inverse-polynomial probability, with the unconditional
probability %of obtaining a qualifying cut %simply acquires the
reduced by a factor 
$\Pr(\mathcal E)$. %and remains inverse polynomial.

Broad examples to which Definition~\ref{def:operational} applies include polynomial resource cost implementations of (i) QAOA with efficient parameter selection, (ii) finite-time quantum annealing or adiabatic evolution, (iii) %adaptive open-system or measurement-based routines, 
adaptive open-system or measurement-based routines with efficiently specified feedback, and (iv) decoded quantum interferometry, among other schemes.  Generally deterministic or randomized preprocessing, classical feedforward, and polynomially many restarts are allowed. Solution advice, exponentially specified controls, and conditioning on exponentially rare events are not. Sec.~\ref{sec:uncompressed} gives more precise qualifications for these examples.

\begin{theorem}[Algorithm-independent preparation barrier]
\label{thm:prep}
Fix a constant $0<\tau\le1$. For the standard uncompressed encoding $d=1$ case, a uniform operational preparation procedure that, on every valid instance, prepares a state whose encoded energy gain satisfies
\begin{equation}
 \genc{\rho}=E_\rho-W/2\ge\tau\gstar
 \label{eq:target}
\end{equation}
with operational success probability $s(N)\ge1/\poly(N)$ would imply $\NP\subseteq\BQP$. 
For $d=1$, computational-basis readout gives $\gdec{\rho}=\genc{\rho}$, so the premise is also a decoded mean-gain guarantee.
%Here $\tau$ is measured relative to the input graph's optimal classical gain $\gstar$, not the relaxed quantum-optimum gain $\gqstar$. 
%Any guarantee with $\tau>1$ also meets the target at $\tau=1$. 

The same conclusion holds for each compressed QRAO encoding $d\in\{2,3\}$, and remains true under  restricted families of fully occupied packings where every qubit contains exactly $d$ variables. 
For the classical-threshold target $\tau=1$, this includes any guarantee $E_\rho\ge T$ with $\Cstar\le T\le\lambda_{\max}(H_d)$. 
%There any guarantee with $\tau>1$ also implies the conclusion at $\tau=1$. 
For each fixed pair $(d,\tau)$, the family of hard instances may depend on $(d,\tau)$. %The conclusion also holds when every qubit contains exactly $d$ variables.
\end{theorem}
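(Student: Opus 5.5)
The plan is to reduce the NP-hard gap problem of Eq.~\eqref{eq:gap} to the assumed preparation procedure. Given a graph from the gap reduction, we first build a valid input $I$. For $d=1$ this is just the graph. For $d\in\{2,3\}$ we compute a valid packing in polynomial time by greedy coloring: vertices assigned to one qubit must be pairwise nonadjacent, and each qubit takes at most $d$ of them. For the fully occupied variant we would pad. One option is to add isolated vertices, or disjoint gadgets of known gain that is negligible relative to $\varepsilon W$, so that every qubit holds exactly $d$ variables; the details go to App.~\ref{app:padding}. Padding changes $W$ and $\gstar$ only in a controlled way, so the YES/NO separation $\varepsilon W$ versus $\eta_\varepsilon W$ survives up to constants.

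Next we run the procedure and apply the decoder $D_d$ to every output, without postselection. We compute the cut value of each decoded output and keep the best over $\poly(N)$ repetitions. In the YES case, conditioned on $\mathcal E$, the identity \eqref{eq:round} gives
\[
\E[g(\mathbf Z)\mid\mathcal E]\ge \frac{\tau\gstar}{d^2}\ge\frac{\tau\varepsilon W}{d^2}.
\]
Every decoded output is a feasible cut, so $g(\mathbf Z)\le\gstar\le W/2$. A reverse-Markov step, the ``mean-to-tail'' step, then gives
\[
\Pr\!\left[g(\mathbf Z)\ge \frac{\tau\varepsilon W}{2d^2}\,\Big|\,\mathcal E\right]\ge \frac{\tau\varepsilon}{2d^2}\cdot\frac{1}{1/2}.
\]
This bound is a constant. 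Multiplying by $\Pr(\mathcal E)\ge s(N)$, the unconditional probability remains inverse-polynomial. In the NO case every cut, decoded or not, has gain at most $\eta_\varepsilon W$.

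We then choose $\varepsilon$ small enough, depending on $(d,\tau)$, that
\[
\frac{\tau\varepsilon}{2d^2}>\frac{c_1\varepsilon}{\log(1/\varepsilon)}.
\]
This is possible because $\eta_\varepsilon/\varepsilon\to0$. The decision rule is to accept iff some observed cut has gain above the threshold $\tau\varepsilon W/(2d^2)$. This gives a bounded-error polynomial-time quantum algorithm for an NP-hard problem, so $\NP\subseteq\BQP$. The dependence of $\varepsilon$ on $(d,\tau)$ is why the hard family may depend on $(d,\tau)$. For $d=1$ the same argument applies with $d^2=1$, and there $\gdec{\rho}=\genc{\rho}$ is immediate.

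For the threshold claim, any guarantee $E_\rho\ge T$ with $T\ge\Cstar$ gives $\genc{\rho}\ge\Cstar-W/2=\gstar$, which is the $\tau=1$ target. The main obstacle I expect is the fully occupied packing restriction together with the definitional care around $\mathcal E$. For $\mathcal E$, we must make sure that the conditional-state guarantee translates into a probability over actual run histories. Decoding every run, rather than conditioning on $\mathcal E$, handles this. For the padding, I would first try adding disjoint copies of a small graph whose vertex count fills the qubits. I would then check that its contribution to $\gstar$ is $O(1)$ per copy and dominated by $\varepsilon W$, for example by scaling up the original edge weights or using bounded numbers of pad edges.
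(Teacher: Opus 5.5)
Your proof is correct, and its core matches the paper's. You decode every run without a success flag, and you apply Eq.~\eqref{eq:round} to the conditional state $\rho_{\mathcal E}$. You then turn the conditional mean into constant tail mass by a reverse-Markov step; your $g(\mathbf Z)/W$ is just the paper's $Y=C_G(\mathbf Z)/W$ shifted by $1/2$. Finally you choose $\varepsilon$ with $\eta_\varepsilon<\tau\varepsilon/(2d^2)$ and repeat $O(1/s(N))$ times with one-sided acceptance.

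The one genuine difference is how you get exact occupancy. The paper takes $d$ disjoint copies of the source graph and places corresponding vertices on the $d$ axes of one qubit. This buys three things:
\begin{itemize}
\item It is automatically a valid packing, so your greedy coloring is not needed.
\item It multiplies $W$ and $C^\star$ by $d$, so the normalized gap is unchanged.
\item The occupancy is substantive: every slot holds a copy of a source vertex together with its edges.
\end{itemize}
Your isolated-vertex option is also exactly right, and simpler. It changes none of $W$, $g^\star$, or $H_d$, so no ``up to constants'' accounting or appendix is needed. Its price is that it meets full occupancy only formally, since the padded axes never enter the Hamiltonian.

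You should drop the gadget route you lean toward at the end. Gadget gain enters the YES decoded mean only with weight $\tau/d^2$, but it enters NO cuts with weight one. It therefore erodes the separation; this is the same $(1-a)mh_m$ effect handled in App.~\ref{app:padding}. Controlling it forces a polynomial rescaling of the source weights and an offset-shifted threshold, which the copy and isolated-vertex constructions both avoid.

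Finally, as the paper notes, take the acceptance threshold to be a rational number strictly inside the gap. The comparison with rational cut values is then exact even when $\tau$ is irrational.
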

No ansatz, optimizer, locality, or control-count restriction is imposed for the uniform polynomial-time quantum computation of the theorem. Figure~\ref{fig:pipeline} above summarizes the reduction %of the theorem 
and its operational boundary. %and the reduction that makes it consequential. 
\begin{proof}
Choose fixed $\varepsilon$ with $\eta_\varepsilon<\tau\varepsilon/(2d^2)$ and set $b=\tau\varepsilon/d^2$. Conditioned on the successful preparation event~$\mathcal E$ for a YES instance, let $\mathbf Z$ be the decoded assignment and set $Y=C_G(\mathbf Z)/W\in[0,1]$. The expectation and probability in the following bound are conditional on $\mathcal E$ and are over the decoder measurements and classical randomness. Then $\E[Y]\ge1/2+b$. 
Since $0\le Y\le1$,
\[
 \mathbb E[Y]
 \le t\Pr(Y\le t)+\Pr(Y>t)
 =t+(1-t)\Pr(Y>t).
\]
Hence for $t=1/2+b/2$ we have
\begin{equation}
 \Pr(Y>t)
 \ge\frac{\mathbb E[Y]-t}{1-t}
 \ge b/2.
 \label{eq:tail}
\end{equation}
Thus each run has cut value that exceeds~$tW$ with probability at least $s(N)b/2$. No cut of a NO instance exceeds~$tW$, including outcomes from unsuccessful preparations. Repeating $O(1/(s(N)b))$ times and accepting if any cut exceeds the threshold gives bounded error. A rational threshold strictly inside the gap permits exact comparison because all input weights are rational. Neither a success flag nor knowledge of $\Cstar$ is required.

For exact occupancy, take $d$ disjoint copies of the source graph and pack corresponding vertices onto the $d$ axes of one qubit. Both $W$ and $\Cstar$ are multiplied by $d$, %so the gain gap is unchanged. 
so the ratio $C^\star/W$ and the normalized YES/NO promise gap in
Eq.~\eqref{eq:gap} are unchanged.
The reduction and the local decoder have polynomial cost. 
%\revlatest{The classical gap is NP-hard under polynomial-time many-one promise reductions. Composing such a reduction with this bounded-error quantum decider puts every language in $\NP$ inside $\BQP$.}

Finally, the promise problem in Eq.~\eqref{eq:gap} is NP-hard under
polynomial-time many-one reductions, i.e., every problem in $\NP$ can be
mapped efficiently to a graph satisfying either its YES or its NO
condition. The bounded-error quantum procedures constructed above
would distinguish these two cases. Hence composing the mapping with that
procedure would solve every problem in $\NP$ within
$\BQP$, implying $\NP\subseteq\BQP$ as claimed.
\end{proof}
The theorem also applies when the procedure chooses its own valid packing. It does not require the preparation method %itself 
to use
a particular QRAO decoder. Once the state has been prepared, the
bounded-error quantum algorithm constructed in the proof can apply
the fixed efficient decoder. %instead of any readout proposed by the
%method. The reduction may append the decoder regardless of the procedure's intended readout. 
%Moreoever, no ansatz, optimizer, locality, or control-count restriction is imposed within uniform polynomial-time quantum computation.
App.~\ref{app:decoding} considers implementation error and expected runtime.

The gap also bounds worst-case gain as $\varepsilon$ varies. Fix any sufficiently small $\varepsilon>0$ and any constant $b>c_1$. If a uniform polynomial-cost quantum pipeline, on every graph with $\Cstar/W\ge1/2+\varepsilon$, returned a cut with conditional mean at least $W[1/2+b\varepsilon/\log(1/\varepsilon)]$ on an operational event of inverse-polynomial probability, its output would distinguish the YES and NO cases of Eq.~\eqref{eq:gap} after polynomially many repetitions, implying $\NP\subseteq\BQP$. Classical algorithms achieve gain $\Omega(\varepsilon W/\log(1/\varepsilon))$ on this promise~\cite{CharikarWirth2004}. Hence, up to constant factors, $\varepsilon W/\log(1/\varepsilon)$ is the optimal worst-case gain scale on this promise for uniform polynomial-cost methods, assuming $\NP\not\subseteq\BQP$. This promise-specific upper bound does not follow from the fixed-$\tau$, all-instance barrier of Theorem~\ref{thm:prep} alone.

\subsection{Tail probability barriers}
%The energy $E_\rho=\Tr(\rho H_d)$ is a Born-rule expectation, while the decoded mean $\E[C_G(\mathbf Z)]$ is a first moment of the complete output distribution. Equal 
Mean values, while helpful, can mask concentrated mediocre outcomes or useful probability distribution tails. Eq.~\eqref{eq:tail} turns a high expectation value into detectable tail mass, but failing its premise does not exclude rare or good samples. 
On studied Ising and random-MaxCut instances, QAOA output can resemble pseudo- or approximate-Boltzmann distributions, linking an effective temperature parameter to high-quality sample probabilities~\cite{DiezValle2023,Lotshaw2023}. Such fits do not imply useful tail mass without considering normalization and density of states. Gibbs and conditional value at risk (CVaR) loss functions instead propose nonlinear or tail-restricted functionals (i.e., the upper tail for MaxCut)~\cite{Li2020Gibbs,Barkoutsos2020}. Together these approaches further motivate distribution-sensitive 
algorithm performance reporting. %assessment. 
%reporting. %Theorem~\ref{thm:prep} applies when their outputs meet its mean premise. Otherwise only the decoded tail probability is constrained.} We do not claim that mean optimization is preferable.

On the other hand, concentration effects can in some cases significantly diminish %the effects of 
higher distributional moments beyond the mean. 
%\revlatest{The one-sided Chebyshev--Cantelli inequality gives a direct concentration bound.} 
For $0<\alpha\le1$, let $T_\alpha=W/2+\alpha\gstar$ and $\mu=\E[C_G(\mathbf Z)]$, and suppose $\Delta=T_\alpha-\mu>0$. %With $\operatorname{Var}$ denoting variance, and with variance and probability taken over the same complete output distribution as this expectation, 
The Chebyshev--Cantelli inequality~\cite{Savage1961} then gives
\begin{equation}
 \Pr[C_G(\mathbf Z)\ge T_\alpha]
 \le \frac{\operatorname{Var}[C_G(\mathbf Z)]}
 {\operatorname{Var}[C_G(\mathbf Z)]+\Delta^2}.
 \label{eq:cantelli}
\end{equation}
Hence a gap $\Delta=\Omega(\gstar)$ and standard deviation $o(\gstar)$ imply $o(1)$ probability mass above the target. Known concentration results establish such control only in structured regimes, including certain shallow-circuit observables and dense-QAOA evolutions under explicit depth and norm conditions~\cite{AnshuMetger2023}. They do not apply to arbitrary states in this theorem. Concentration of a fixed-parameter expectation across random instances also does not control shots required on a single input.

Define $p_\alpha=\Pr[C_G(\mathbf Z)\ge W/2+\alpha\gstar]$, with the probability again taken over the complete output distribution.
\begin{corollary}[Tail-probability barrier]
\label{cor:tail}
Fix a constant $0<\alpha\le1$. If a uniform quantum or hybrid procedure with polynomial end-to-end cost returns, on every valid MaxCut input, a cut $\mathbf Z$ satisfying
\begin{equation}
 C_G(\mathbf Z)\ge W/2+\alpha\gstar
 \label{eq:tail-target}
\end{equation}
with probability $p_\alpha\ge1/\poly(N)$, then $\NP\subseteq\BQP$.
\end{corollary}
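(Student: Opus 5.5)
The plan is to reduce directly to the promise gap in Eq.~\eqref{eq:gap}, following the second half of the proof of Theorem~\ref{thm:prep} but skipping the mean-to-tail step, since the premise already supplies tail mass.

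First fix $\varepsilon$ small enough that $\eta_\varepsilon<\alpha\varepsilon/2$. This is possible because $\eta_\varepsilon/\varepsilon=c_1/\log(1/\varepsilon)\to0$. On a YES instance we have $\gstar\ge\varepsilon W$. Each run of the procedure therefore returns, with probability $p_\alpha\ge1/\poly(N)$, a cut with
\[
C_G(\mathbf Z)\ge W/2+\alpha\gstar\ge W(1/2+\alpha\varepsilon).
\]
Set the threshold $t=1/2+\alpha\varepsilon/2$, or any rational strictly between $1/2+\eta_\varepsilon$ and $1/2+\alpha\varepsilon$. On a NO instance every cut satisfies $C_G(z)\le \Cstar\le W(1/2+\eta_\varepsilon)<tW$. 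This holds for every output of the procedure whatsoever, including failed runs, because it is a property of the graph and not of the algorithm.

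The decider runs the procedure $O(\poly(N))$ times. It computes each returned cut value exactly in polynomial time, since the weights are rational, and accepts iff some cut exceeds $tW$. On YES instances the probability that all runs fail is at most $(1-p_\alpha)^{k}$, which falls below $1/3$ for $k=O(1/p_\alpha)$ repetitions. On NO instances the decider never accepts. The decider is thus a bounded-error (indeed one-sided-error) polynomial-time quantum algorithm for the NP-hard promise problem. Composing it with the many-one reduction gives $\NP\subseteq\BQP$, exactly as in the final paragraph of the proof of Theorem~\ref{thm:prep}.

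I expect no serious obstacle. The one point needing care is that $p_\alpha$ refers to the complete end-to-end output distribution, so no conditioning on an operational event is needed and no success flag is required. The hypothesis is used only on YES instances; on NO instances soundness is automatic. One could also remark that the same argument applies to compressed encodings after decoding, since the statement concerns the returned cut $\mathbf Z$ directly.
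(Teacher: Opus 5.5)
Your proposal is correct and matches the paper's proof: choose $\varepsilon$ with $\eta_\varepsilon<\alpha\varepsilon/2$, place a rational threshold strictly inside the gap of Eq.~\eqref{eq:gap}, and decide the NP-hard promise problem by polynomial repetition with exact evaluation of each cut. Your extra remarks are accurate elaborations of that argument: the error is one-sided, soundness on NO instances holds for every output, and no conditioning or success flag is needed.
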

%\revlatest{App.~\ref{app:auxiliary} gives the proof.}
\begin{proof}
Choose fixed $\varepsilon>0$ with $\eta_\varepsilon<\alpha\varepsilon/2$. On a YES instance of the gap in Eq.~\eqref{eq:gap}, the event in Eq.~\eqref{eq:tail-target} gives a cut of value at least $W/2+\alpha\varepsilon W$. On a NO instance, every cut has value at most $W/2+\eta_\varepsilon W$. Choose a fixed rational number strictly between $\eta_\varepsilon$ and $\alpha\varepsilon$. Polynomial repetition and exact comparison with the resulting rational threshold give a bounded-error quantum decision procedure for the gap.
\end{proof}
%Report $p_\alpha$ together with the sampling and postprocessing cost. 
%Because $p_\alpha$ is a per-run success probability, report it
%together with the cost per run and the number of repetitions. In
%particular, obtaining a target-reaching cut with constant
%probability requires $O(1/p_\alpha)$ independent runs, in addition
%to the classical cost of decoding, repair, and objective evaluation.
%Corollary~\ref{cor:tail} 
The corollary requires no mean value guarantee. It excludes an inverse-polynomial upper tail, not exponentially small mass, and is again worst case. Random-instance claims require a specified ensemble and scaling; Erd\H{o}s--R\'enyi, random-regular, planted, weighted, and application-derived inputs need not behave alike. The% theorem does 
results do not rule out good performance with high probability over such an ensemble. By contrast, typical-output claims concern repeated runs on one specified problem instance. That setting is described by $p_\alpha$ and the distribution of decoded cut values. Such a random-instance barrier would require distributional hardness or a worst-to-average-case reduction for the chosen ensemble. %, neither assumed here.

% ===== END shared/preparation.tex =====

%\section{Uncompressed encoding and algorithmic scope}\label{sec:uncompressed}
\section{Algorithmic scope}\label{sec:uncompressed}

% ===== BEGIN shortest/uncompressed.tex =====

Here we discuss implications of our result for algorithms based on standard uncompressed encodings, and specialize to the compressed case in Sec.~\ref{sec:compressed}.

With $d=1$, the standard encoding uses one qubit per vertex. The objective Hamiltonian $H_1$ is diagonal with $\lambda_{\max}(H_1)=\Cstar$, its eigenstates %are spanned by 
correspond to efficiently preparable product states~$|z\rangle$, and %Computational basis measurement gives $\gdec{\rho}=\genc{\rho}$, so 
the encoded and decoded mean gains coincide \branchadd{under computational-basis readout:} $\gdec{\rho}=\genc{\rho}$. Eq.~\eqref{eq:target} %therefore asks directly for the same 
asks for a fraction $\tau$ of the available classical gain, %with At $\tau=1$, support is 
restricted to optimal cuts at $\tau=1$. %Good states are nevertheless simple product states. Preparing $|z^\star\rangle$ is trivial once an optimal assignment $z^\star$ is known. Theorem~\ref{thm:prep} constrains selecting such a state from the instance, not the entanglement or depth needed to prepare a known answer.

\subsection{Conditional algorithmic coverage}
The theorem applies to an end-to-end algorithm only when that method satisfies Definition~\ref{def:operational}. For QAOA~\cite{Farhi2014,Hadfield2019}, assuming $\NP\not\subseteq\BQP$, it rules out a uniform polynomial-time procedure that finds parameters and executes a corresponding polynomial-size quantum circuit attaining Eq.~\eqref{eq:target} on every possible input graph, irrespective of the particular mixer operator used or many other algorithm variations. The statement constrains efficiently finding and implementing satisfactory parameters, not whether such parameters can exist. This is distinct from NP-hardness of classically training cases of variational quantum algorithms~\cite{Bittel2021}. A related companion preprint proves worst-case \#P-hardness of exact or exponentially precise expectation value evaluation for %a specified MaxCut 
QAOA circuit at any fixed depth $p\ge2$~\cite{HadfieldQAOAExpectation2026}. %\revlatest{Those results concern parameter optimization and fixed-circuit evaluation. The present barrier applies to any end-to-end preparation procedure satisfying Definition~\ref{def:operational}, regardless of ansatz or mixer.} 
%
%Complementary QAOA results establish worst-case limitations when
%shallow circuits access only local graph neighborhoods, when
%symmetry preservation limits the achievable approximation ratio,
%and when a desired guarantee requires many QAOA rounds
%\cite{Farhi2020,Bravyi2020,Benchasattabuse2025}. 
Complementary QAOA results establish worst-case limitations when
shallow implementations access only local graph neighborhoods, or when
symmetry preservation limits the achievable approximation ratio
\cite{Farhi2020,Bravyi2020}. For QAOA with the Grover-mixer, %complementary
recent work derives nontrivial %approximation quality dependent 
lower bounds relating the number of QAOA layers to approximation quality, %including
%$p=\Omega(\sqrt{|E|})$ for any fixed ratio above $1/2$ on
%bipartite MaxCut instances; 
though the same framework gives only trivial
bounds for standard transverse-field QAOA on local MaxCut
Hamiltonians~\cite{Benchasattabuse2025}.

Hybrid schemes that incorporate QAOA or other ans\"atze as subroutines such as iterative quantum optimization~\cite{Bravyi2020,dupont2023quantum,Brady2024,FinzgarQIRO2024,Brady2026} or quantum relax-and-round~\cite{dupont2024extending, maciejewski2024multilevel} 
 instead use quantum measurement information to  guide choices within classical algorithm frameworks, offering attractive alternatives. Nevertheless the theorem also reaches these approaches %only 
 through any end-to-end MaxCut guarantee satisfying Definition~\ref{def:operational}, when applicable. 
%
%\revlatest{Complementary work establishes QAOA-specific depth, locality, and symmetry limitations. Quantum-informed recursive optimization (QIRO) and iterative quantum optimization (IQO) instead place quantum information inside recursive classical reductions~\cite{Farhi2020,Bravyi2020,Benchasattabuse2025,FinzgarQIRO2024,Brady2024,Brady2026}. These results concern particular ansatzes or problems. The present theorem reaches them only through an end-to-end MaxCut guarantee satisfying Definition~\ref{def:operational}.}

Warm start approaches in particular make the advice boundary explicit. Standard variants initialize %QAOA 
a quantum ansatz using a classical relaxation or incumbent solution~\cite{Egger2021}, and iterative extensions update the incumbent after repeated sampling~\cite{LotshawIWS2026,MaciejewskiNDAWS2026}. If a supplied cut $z_0$ satisfies $C_G(z_0)-W/2\ge\tau\gstar$, then $|z_0\rangle$ for $d=1$, or its QRAC product embedding for $d\in\{2,3\}$, already satisfies Eq.~\eqref{eq:target} and gives no contradiction. The theorem therefore does not (cannot) exclude supplied advice, promises that %efficiently provide 
such a seed is provided, or empirical incumbent improvements, including those obtained leveraging system noise~\cite{MaciejewskiNDAR2025,MaciejewskiNDAWS2026}. Seeds and updates computed from $I$ count toward the end-to-end cost, so any uniform polynomial cost all-instance %state 
guarantee remains covered. %\revlatest{An incumbent-only loop need not satisfy the state premise mid-run. If it has the corresponding uniform polynomial-cost final-cut guarantee, the classical gap applies directly.}

%\revlatest{Noise-directed adaptive remapping (NDAR) aligns the current best string with a device-noise attractor through an invertible bit-flip gauge~\cite{MaciejewskiNDAR2025}. Noise-directed adaptive warm-starting (ND-AWS) adds iterative bias and reports hardware improvements over a nongauge control~\cite{MaciejewskiNDAWS2026}.} Here adaptive means that information obtained in one round is used
%to update the mapping or warm-start bias for the next round. Incumbents and gauges generated within the run count toward its end-to-end cost. A polynomial-cost method whose final state, after undoing the gauge, meets Eq.~\eqref{eq:target} on every instance remains covered. Device-specific improvements are empirical and should be benchmarked under matched device, layout, noise, and calibration against %nonadaptive, nongauge, and random-gauge controls.
%fixed-gauge, fixed-bias, nongauge, and random-gauge controls.

%Several QAOA analyses 
In terms of existing quantum optimization analyses, a number of QAOA results in the literature %can be read 
directly translate to gain above random. For every unweighted triangle-free $D$-regular graph, $D\ge2$, the standard depth-one QAOA state $|\psi_1(\gamma,\beta)\rangle=e^{-i\beta B}e^{-i\gamma H_1}|+\rangle^{\otimes |V|}$, with mixer $B=\sum_iX_i$, has expected energy $E_1(\gamma,\beta)=\langle\psi_1(\gamma,\beta)|H_1|\psi_1(\gamma,\beta)\rangle$. Let $E_1^\star=\max_{\gamma,\beta}E_1(\gamma,\beta)$. Optimizing its two angles gives~\cite{Wang2018}
\begin{equation}
 \frac{E_1^\star-W/2}{W}
 =\frac1{2\sqrt D}\left(1-\frac1D\right)^{(D-1)/2},
  \label{eq:qaoa-shortest}
\end{equation}
with $W=|E|$, or in terms of optimal classical gain
\begin{equation}
 \frac{E_1^\star-W/2}{\gstar}
 \ge\frac1{\sqrt D}\left(1-\frac1D\right)^{(D-1)/2}.
 \label{eq:qaoa-shortest2}
\end{equation}
The %second relation 
inequality uses $\gstar\le W/2$, and its right-hand side %is asymptotic to 
approaches $1/\sqrt{eD}$ from above as $D$ grows. 
%\revlatest{One optimum is $\beta=\pi/8$ and $\gamma=\arctan(1/\sqrt{D-1})$. 
Equality %in the gain bound 
holds for bipartite complete graphs $K_{D,D}$, where $\gstar=W/2$. At $D=100$, the mean recovers about $6.08\%$ of the available gain for $K_{100,100}$ and at least this fraction for every other unweighted triangle-free $100$-regular graph. \branchadd{For the same triangle-free graph class, classical polynomial-time algorithms also obtain gain $\Omega(W/\sqrt D)$~\cite{Barak2015}. The depth-one quantum bound has the same order in $D$, without an end-to-end advantage claim.} Depth-two analysis for high-girth graphs similarly yields explicit gain bounds%and compare QAOA with local classical algorithms
~\cite{Marwaha2021}, %Related QAOA and classical bounds treat
and gain results have also been obtained for Max $k$XOR%, including MaxCut at $k=2$
~\cite{MarwahaHadfield2022}. 
%These are structured-family gain guarantees, not hardness witnesses, consistent with the fixed-degree boundary above.
%At fixed $D$, 
These results are positive gain guarantees for structured
graph families, not hardness results, and in particular they are consistent with the %earlier 
observation that the all-instance barrier does not extend unchanged to fixed-degree promise classes.

Decoded quantum interferometry (DQI) provides a structurally different example. It combines quantum Fourier transforms with reversible classical decoding to bias samples toward high objective values, with positive results obtained for structured optimization problems~\cite{JordanDQI2025}. This approach does not evade the present worst-case statement; any uniform polynomial-cost DQI procedure attaining Eq.~\eqref{eq:target} on every MaxCut instance would satisfy the hypothesis of Theorem~\ref{thm:prep}. Existing DQI guarantees instead exploit promise structure. In the MaxCut regime where DQI has a nontrivial asymptotic guarantee, the corresponding instances are exactly solvable in classical polynomial time~\cite{ParekhDQIMaxCut2025}. DQI therefore illustrates the structured-promise route left open by the theorem rather than a competing all-instance claim.

Other quantum optimization approaches are likewise covered by the theorem only when their complete implementation satisfies Definition~\ref{def:operational}.
%The same qualification matters for other models. 
Representative examples include quantum annealing and adiabatic optimization~\cite{Kadowaki1998,FarhiAdiabatic2000}, 
approaches based on imaginary time evolution or dissipation~\cite{Kliesch2011,CubittDQE2023,DingChenLin2024,bauer2024combinatorial,morris2024performant,StollenwerkHadfield2025}. 
%dissipative simulation and state preparation, including the dissipative quantum eigensolver~\cite{Kliesch2011,CubittDQE2023,DingChenLin2024}, and measurement-based imaginary-time or approximate optimization~\cite{StollenwerkHadfield2025}.
 
An annealing, adiabatic, open-system, measurement-based, or hybrid method is covered when its complete finite-time implementation is uniformly realizable by a polynomial-size adaptive circuit to the accuracy required in App.~\ref{app:decoding}. %The theorem does not assert that every such evolution has this property~\cite{Aharonov2007,Cleve2017}.} 
Whether this condition holds is model dependent, with known equivalence and simulation results requiring explicit assumptions on the evolution and its implementation~\cite{Aharonov2007,Cleve2017}. 
Specifics such as duration, control construction, precision, success probability, resets, and repetitions all %contribute to each
count in the assessment.

\subsection{Ordinary approximation ratio can be misleading}
The distinction between ordinary approximation ratio and gain is
already visible on easy graphs. For $K_{100}$, $W=4950$ and
$C^\star=2500$, so the random baseline $W/2=2475$ already has ordinary approximation ratio $0.99$ while at the same time recovering zero gain. 
%corresponds to ordinary approximation ratio $0.99$, despite having
%zero gain. 

To hold both the optimum and the mean cut value fixed
while changing the baseline, consider $K_{50,50}$, for which
$W=C^\star=2500$. Outputting for example the bipartition with probability $0.99$
and the all-equal cut otherwise again gives mean cut value $2475$
and ordinary approximation ratio $0.99$, but now gives gain
%approximation ratio
\[
\frac{2475-1250}{2500-1250}=0.98.
\]
Hence the same mean value and ordinary approximation ratio can
represent either no improvement over random assignment, or recovery
of nearly all available gain.

There are also structured families where obtaining positive gain is easy. If the maximum vertex degree is at most~$D$, a deterministic polynomial-time algorithm~\cite{GutinYeo2023} finds a matching $\mathcal M$ with total weight $w(\mathcal M)\ge W/(2D-1)$ and an expected cut satisfying
\begin{equation}
 \overline C\ge\frac W2+\frac{w(\mathcal M)}2,
 \qquad
 \frac{\overline C-W/2}{\gstar}\ge\frac1{2D-1}.
 \label{eq:bounded-shortest}
\end{equation}
App.~\ref{app:auxiliary} gives the %matching 
construction and a derandomization. 
%The %all-instance 
%general obstruction therefore should not be imported unchanged into every fixed-degree promise class.

% ===== END shortest/uncompressed.tex =====

%\section{Compression and a weaker preparation target}
\section{Compressed encodings and classical gain targets} \label{sec:compressed}
For QRAO with $d\in\{2,3\}$, each classical assignment has a QRAC product state embedding with the same objective value, but the quantum optimum can be larger than the maximum cut value. Consequently, preparing a state with energy at least $C^\star$ can
be strictly easier than maximizing the compressed Hamiltonian.

QRAO %variable placement 
compression corresponds to the $k=1$ case of the
Pauli-correlation encoding (PCE) framework, which stores
$m=O(n^k)$ bits in signs of weight-$k$ Pauli expectations on
$n$ qubits~\cite{Sciorilli2025}.
%\revcut{In important difference is that PCE accommodates much more general loss functions and decoding than the particular QRAO algorithm.}
\revnext{An important difference is that PCE supports broader loss functions and decoding procedures than the QRAO scheme considered here.}
\revcut{The premise of Theorem~\ref{thm:prep} extends to $k>1$ only with an efficient readout}
\revcut{preserving a centered encoded-to-decoded relation along with end-to-end accounting.}
%At any $k$, the classical gain gap again excludes a uniform polynomial-cost PCE pipeline returning on every input a cut with fixed positive classical gain fraction and inverse-polynomial success, unless $\NP\subseteq\BQP$.

\paragraph*{Implications for PCE.}
More explicitly, consider a PCE family with a uniform polynomial-cost
decoder $D_k$ and a centered encoded quantity
$\widetilde g_k^{\mathrm{enc}}(\rho)$ satisfying
\begin{equation}
 \mathbb E[C_G(D_k(\rho))]-\frac{W}{2}
 \ge \kappa\,\widetilde g_k^{\mathrm{enc}}(\rho),
 \qquad \kappa\ge\kappa_0>0,
 \label{eq:pce-transfer}
\end{equation}
where $\kappa_0$ is independent of the instance size, and the expectation
includes all measurement and decoding randomness.
Then %an all-instance 
a \revcut{worst case}\revnext{worst-case} preparation guarantee
$\widetilde g_k^{\mathrm{enc}}(\rho)\ge\tau\gstar$, for fixed
$\tau>0$ and with inverse-polynomial operational success, would imply
$\NP\subseteq\BQP$ by the same reduction as
Theorem~\ref{thm:prep}.
If the transfer factor vanishes with instance size, or if the optimized
PCE loss is not consistent with the final decoder, the present theorem
does not by itself convert an encoded-state guarantee into a fixed
decoded-gain guarantee.
Independently of such a transfer relation, the classical gain gap excludes,
at any compression order $k$, a uniform polynomial-cost PCE pipeline
that returns on every input a final cut recovering a fixed positive
fraction of the optimal classical gain, unless
$\NP\subseteq\BQP$.

For PCE, this end-to-end accounting includes circuit training, estimation of the
Pauli correlations, sign extraction, repair or local search, and
repetitions. Small observable magnitude margins can therefore move cost from
qubit width into the number of state preparations and measurements,
even when the number of measurement settings is small
\cite{Sciorilli2025,HadfieldNFC2026}.\footnote{Here a measurement
setting means a distinct measurement-basis or POVM configuration,
including any required basis-change circuit.} %; repeated shots within
%one setting are counted separately.} 

%Subsequent QRAO work develops recursive rounding and
%nonvariational preparation heuristics~\cite{Kondo2025,He2025}.

%[new] For PCE, this end-to-end accounting includes estimation of the Pauli
%correlations, sign extraction, repair or local search, and repetitions.
%Small observable margins can therefore move cost from qubit width into
%the number of state preparations and measurements, even when the number
%of measurement settings is small~\cite{Sciorilli2025,HadfieldNFC2026}.

\subsection{Compressed calibration and gain accounting}
The following two-qubit example makes the distinction between the classical threshold value and the compressed quantum optimum concrete. 
%attaining the classical target $\Cstar$ is not the same as
%A two-qubit example calibrates the gap between the classical threshold and the compressed quantum optimum. 
Consider packing three disjoint unit-weight edges on the three Pauli axes to give 
\begin{equation}
 H_3=\tfrac32(\Id-X_1X_2-Y_1Y_2-Z_1Z_2).
 \label{eq:matching-shortest}
\end{equation}
Here $\Cstar=3$ while $\lambda_{\max}(H_3)=6$. 
Bitwise-opposite QRAC codewords are product states of energy $3$, whereas the %singlet has energy $6$. 
two-qubit singlet
$|\psi^-\rangle=(|01\rangle-|10\rangle)/\sqrt{2}$, which is perfectly
anticorrelated along all three Pauli axes, attains the optimum energy $6$. 
Magic rounding gives expected cut values $5/3$ for the product state
and $2$ for the singlet. Thus the singlet's encoded-energy increase
of $3$ over product states raises the expected decoded cut by $1/3$, giving exactly the
$1/d^2=1/9$ transfer for $d=3$.
Although entanglement can improve QRAO performance on some instances~\cite{TeramotoEntanglement2023}, it is unnecessary for reaching the classical threshold.

More generally, combining Eqs.~\eqref{eq:round} and
\eqref{eq:target} gives
\[
 \overline C-\frac W2
 \ge\frac{\tau}{d^2}\left(\Cstar-\frac W2\right),
\]
so %\revcut{recovers a gain fraction $\gamma\ge\tau/d^2$}
\revcut{the decoded mean recovers decoded gain ratio $\Rdec{\rho}\ge\tau/d^2$}\revnext{the decoded output has gain ratio $\Rdec{\rho}\ge\tau/d^2$}. 
At the classical-threshold target $\tau=1$, the premise becomes  $E_\rho\ge\Cstar$. %and Eq.~\eqref{eq:round} gives
%$ \overline C\ge  W/2+(\Cstar-W/2)/d^2$. 
With $c=\Cstar/W$, dividing by $\Cstar$ yields the
instance-dependent \revcut{approximation ratio}\revnext{ordinary approximation-ratio} guarantee
\begin{equation}
 R\ge\frac1{d^2}+\frac{1-1/d^2}{2c},
 \label{eq:conditional-shortest}
\end{equation}
which is close to one when $c=\Cstar/W$ is close to $1/2$. 
Since $1/2<c\le1$, the right-hand side is minimized at $c=1$,
giving the uniform bounds $R\ge5/8$ for $d=2$ and $R\ge5/9$ for
$d=3$.
%
%Eq.~\eqref{eq:round} maps the encoded energy target $\genc{\rho}\ge\tau\gstar$ to decoded gain ratio at least $\tau/d^2$. At the classical-threshold target $\tau=1$, this implies the familiar uniform ordinary approximation-ratio bounds $5/8$ and $5/9$ for $d\in\{2,3\}$, but the instance-dependent ordinary approximation ratio required by the same preparation premise is
%\begin{equation}
% R\ge\frac1{d^2}+\frac{1-1/d^2}{2c},
% \label{eq:conditional-shortest}
%\end{equation}
%which is close to one when $c=\Cstar/W$ is close to $1/2$. 
%Figure~\ref{fig:gain} shows why a near-unity ordinary approximation ratio need not mean that the required gain has been recovered. 
Figure~\ref{fig:gain} shows how the random-cut baseline alone can
produce an \revcut{approximation ratio}\revnext{ordinary approximation ratio} near one, and so a high value need not guarantee recovery of the \revcut{decoded gain fraction $1/d^2$}\revnext{decoded gain ratio $1/d^2$}. 

\par\medskip\noindent\begin{minipage}{\columnwidth}
 \centering
 \begin{tikzpicture}
 \begin{axis}[
   width=0.98\columnwidth,
   height=0.82\columnwidth,
   xmin=0.5,xmax=1.0,
   ymin=0.48,ymax=1.02,
   xlabel={\revlatest{Optimal cut fraction} $c=C^\star/W$},
   ylabel={Approximation ratio $R=\overline C/C^\star$},
   xtick={0.5,0.6,0.7,0.8,0.9,1.0},
   ytick={0.5,0.6,0.7,0.8,0.9,1.0},
   legend style={at={(0.03,0.03)},anchor=south west,draw=none,fill=none,font=\scriptsize},
   tick label style={font=\scriptsize},
   label style={font=\small},
   domain=0.5:1.0,
   samples=120,
   axis line style={thin},
 ]
   \addplot+[no marks,thick] {1};
   \addlegendentry{\revlatest{Ising} ($d=1$)}
   \addplot+[no marks,thick] {0.25+0.375/x};
   \addlegendentry{\revlatest{QRAO} ($d=2$)}
   \addplot+[no marks,thick] {1/9+4/(9*x)};
   \addlegendentry{\revlatest{QRAO} ($d=3$)}
   \addplot+[no marks,thick,dashed] {1/(2*x)};
   \addlegendentry{Random cut}
   \addplot+[no marks,thick,dotted] {0.878567};
   \addlegendentry{GW guarantee}
 \end{axis}
 \end{tikzpicture}
 \captionof{figure}{Gain and ordinary approximation ratio for the standard Ising ($d=1$) and compressed QRAO ($d=2,3$) encodings. %\revlatest{Here $W$ is the total edge weight, $\Cstar$ is the optimal cut value, $c=\Cstar/W$, and $R=\overline C/\Cstar$.} 
 For fixed $0<\tau\le1$, the encoded energy condition $E_\rho-W/2\ge\tau(\Cstar-W/2)$ \revcut{related}\revnext{relates} the centered energy of the quantum state \revcut{with}\revnext{to} the optimal classical gain of the problem instance\revcut{,} and yields \revcut{decoded gain fraction at least $\tau/d^2$}\revnext{decoded gain ratio at least $\tau/d^2$}. The solid curves show the resulting conditional bounds on $R$ at the classical-threshold target $\tau=1$, equivalent to $E_\rho\ge\Cstar$. \revlatest{They show consequences of that target, not guarantees achieved by known efficient algorithms. The dashed curve is the uniform classical random cut ratio, and the dotted line is the Goemans-Williamson guarantee.} As $c$ approaches $1/2$, the random guessing \revcut{approximation ratio}\revnext{ordinary approximation ratio} approaches one even though its classical gain is zero.}
 \label{fig:gain}
\end{minipage}\par\medskip

% ===== BEGIN shared/near_tight.tex =====
\subsectionview{near-tight}{Nearly tight compressed relaxations}

Since a decoded cut cannot exceed $\Cstar$, Eq.~\eqref{eq:round} gives
\begin{equation}
 \Cstar\le\lambda_{\max}(H_d)\le W/2+d^2\gstar.
 \label{eq:tight}
\end{equation}
The barrier also applies to guarantees measured relative to the
relaxed quantum optimum. %Since $\lambda_{\max}(H_d)\ge\Cstar$, 
\revcut{Any fixed $0<\beta_{\rm q}\le1$ satisfying}
\revnext{Equivalently, any state satisfying $\Rq{\rho}\ge r$ for a fixed $0<r\le1$, or}
\[
 E_\rho-\frac W2
 \ge\revcut{\beta_{\rm q}}\revnext{r}\left[\lambda_{\max}(H_d)-\frac W2\right]\revnext{,}
\]
also satisfies Eq.~\eqref{eq:target} with
$\tau=\revcut{\beta_{\rm q}}\revnext{r}$.

A small relative relaxation gap need not make the classical preparation threshold accessible.

\begin{revisedproposition}[Near-tight compressed relaxations]
\label{cor:near}
Fix $d\in\{2,3\}$ and a constant $0<\tau\le1$.
Let $\omega=\Id/2^n$ denote the maximally mixed state.
There is a family of fixed packings of exactly $dn$ variables into
$n$ qubits on which any uniform operational preparation procedure
meeting Eq.~\eqref{eq:target} on every input with inverse-polynomial
success would imply $\NP\subseteq\BQP$, even though
\begin{equation}
 \frac{\lambda_{\max}(H_d)}{\Cstar}=1+\Theta(1/n),\;\; %\qquad
 \frac{E_\omega}{\lambda_{\max}(H_d)}=1-\Theta(1/n).
 \label{eq:vanish}
\end{equation}
%\revlatest{The first ratio measures the relative excess of the relaxed
%quantum optimum over the classical optimum, and the second is the energy
%approximation ratio of the maximally mixed state.}
Since $E_\omega=W/2$, we have
$\genc{\omega}=\gdec{\omega}=0$.
The positive constants depend only on $d,\tau$.
\revcut{The separating objective scale is $\Theta(W/n)$, so a robust implementation requires aggregate preparation and readout error $O(W/n)$, equivalently $O(1/n)$ after normalization by $W$.}
\revnext{The separating objective scale is $\Theta(W/n)$. Let $\zeta W$ bound the additive shortfall in encoded energy gain and $\xi W$ bound the additive loss in the decoded mean due to readout. The resulting decoded-mean loss is at most $(\zeta/d^2+\xi)W$, which must be at most $c_0W/n$ for a sufficiently small constant $c_0>0$ to preserve the YES--NO separation. A trace-distance preparation error $T$ is covered by $\zeta\le dT$, as in App.~\ref{app:decoding}.}
\end{revisedproposition}
\begin{proof}
Let $G$ be an $m$-vertex instance of the classical gap problem
\eqref{eq:gap}, with total edge weight $W_0$. Set
$a=\tau/d^2$. Choose the constants $\varepsilon>0$ and $M\ge2$ as
in \revcut{Appendix}\revnext{App.}~\ref{app:padding}, with $M$ independent of $m$. Add to
$G$ a disjoint weighted clique on $2Mm$ vertices whose total edge
weight is $mW_0$. Take $d$ copies of this padded graph and pack
corresponding vertices from the copies onto the same qubit, using
one Pauli axis per copy. The resulting instance has exactly $dn$
variables on $n=(1+2M)m$ 
qubits and total edge weight $W_{\rm tot}=d(m+1)W_0$.

The padding ensures the optimal gain is only an $O(1/m)$ fraction of
$W_{\rm tot}$, while preserving enough of the original YES--NO
separation. Specifically, Eq.~\eqref{eq:paddedgap} shows that,
after normalization by $W_{\rm tot}$, the decoded expectation on a
successful YES preparation exceeds every possible NO-instance cut
by more than
\[
 \frac{a\varepsilon}{2(m+1)}=\Omega(1/m).
\]
The mean-to-tail argument used in the proof of Theorem~\ref{thm:prep} therefore
gives inverse-polynomial probability of crossing a rational
threshold between the two cases. \revcut{Combining this with the assumed inverse-polynomial preparation success probability and polynomially many repetition again decides the NP-hard gap in BQP.} \revnext{Combining the two inverse-polynomial probabilities and repeating the procedure polynomially many times again yields a BQP decision procedure for the NP-hard gap.} The construction uses polynomially
many edges and polynomial-bit rational weights.

It remains to establish the two asymptotic ratios. From
Eq.~\eqref{eq:paddedgain}, the full instance satisfies
$g^\star_{\rm tot}/W_{\rm tot}=O(1/m)$. Equation~\eqref{eq:tight}
then gives
\[
 \frac{\lambda_{\max}(H_d)}{\Cstar}-1=O(1/m),
 \qquad
 1-\frac{E_\omega}{\lambda_{\max}(H_d)}=O(1/m).
\]
For the matching lower bounds, the packed clique Hamiltonian
\eqref{eq:paddingham} has a product of singlets as a maximizing
state. Its quantum optimum exceeds both its classical optimum and
its maximally mixed-state energy by $\Theta(W_0)$, shown in 
Eq.~\eqref{eq:paddingexcess}. The source and padding terms act on
disjoint qubits, while the total classical and quantum optimum
values are each $\Theta(mW_0)$. Both %displayed 
deficits are therefore
$\Omega(1/m)$. Since $n=(1+2M)m$ with fixed $M$, both are
$\Theta(1/n)$ as claimed.
\end{proof}

\revcut{This gives an asymptotic worst-case construction, not one concerning generic instances.}\revnext{This is an asymptotic worst-case construction, not a statement about generic instances.} The fixed padding factor~$M$ inherits %unspecified and 
potentially large constants from the classical gap, so the family is not proposed as a near-term benchmark. Theorem~\ref{thm:prep} tolerates sufficiently small fixed normalized error, whereas the \revcut{corollary}\revnext{proposition} has the inverse-polynomial budget stated above. At $d=1$, the same padding preserves hardness and the mixed-state deficit, but the relaxation is exactly tight.

% ===== END shared/near_tight.tex =====

The bounded-degree construction above also limits this near-tight phenomenon. If the graph has maximum degree~$D$, Eq.~\eqref{eq:bounded-shortest} implies
\begin{equation}
 \frac{E_\omega}{\lambda_{\max}(H_d)}
 \le\frac{W/2}{\Cstar}\le1-\frac1{2D}.
 \label{eq:degree-shortest}
\end{equation}
Hence the energy approximation ratio $E_\omega/\lambda_{\max}(H_d)\ge1-A/n$ for a constant $A>0$, as in \revcut{Corollary~\ref{cor:near}}\revnext{Proposition~\ref{cor:near}}, requires $D\ge n/(2A)$. Linear maximum-degree growth is necessary for this vanishing energy approximation deficit, not merely an artifact of the clique padding of the proof. At the same time this does not say that every hard approximation problem requires large degree.

\section{Outlook}\label{sec:discussion-shortest}
%The preparation barrier already occurs in the standard uncompressed encoding. \revlatest{Theorem~\ref{thm:prep} concerns the cost of selecting a useful state for optimization. Corollary~\ref{cor:near} further elucidates what compression adds. We confirm that relaxation quality need not predict that cost, and a state with zero encoded energy gain and zero decoded mean gain can have energy approximation ratio approaching one.}

\revnext{The quantum state preparation barriers of Theorem~\ref{thm:prep} and  Proposition~\ref{prop:ordinary-ratio} cover both standard and compressed encodings, on near-term devices as well as future fault-tolerant quantum hardware. Proposition~\ref{cor:near} shows that a nearly-tight optimized relaxation need not make the %required 
quantum energy gain efficiently accessible as improvement in the output solution.}
%\revcut{The quantum state preparation barrier from complexity arises in both standard and compressed encodings.}
%\revcut{Theorem~\ref{thm:prep} gives the general result, and Corollary~\ref{cor:near}}
%\revcut{further elucidates what compression adds.}
For QRAO and its recently proposed extensions~\cite{Kondo2025,He2025,SuzukiDecoder2026}, a particular implementation is covered by the theorem only when its complete
preparation-and-decoding pipeline satisfies
Definition~\ref{def:operational}. 
%\revcut{We confirm that relaxation quality need not predict that cost,}
%\revcut{and a state with zero encoded energy gain and zero decoded mean gain}
%\revcut{can still have energy approximation ratio approaching one.}

%Ordinary approximation ratios can give vacuous decoded-gain bounds when $\gstar$ is small, so the theorem is a worst-case exclusion, not a performance ranking. %Comparisons should use common instances, a common random-cut baseline, and matched end-to-end budgets including preparation, optimization, shots, restarts, readout, precision, decoding or repair, and classical postprocessing~\cite{Abbas2024,BernalNeira2024}. 
%

Our results directly support the use of diverse benchmarking criteria in fairly assessing the performance of quantum algorithms, and make clear that reporting any one ratio alone is often insufficient. 
Comparisons should use common instances, gain and classical
baselines along with \revcut{approximation ratio}\revnext{the ordinary approximation ratio}, and matched end-to-end budgets including preparation,
optimization, distinct measurement settings, shots per setting,
restarts, readout, precision, decoding or repair, and classical
postprocessing~\cite{Abbas2024,BernalNeira2024}. 
%
%\revlatest{Following Table~\ref{tab:measures}, report $R$ and the energy approximation ratio as mean-quality diagnostics, the encoded gain ratio $\Renc{\rho}$, the decoded gain ratio $\Rdec{\rho}$, and $p_\alpha$ as target-reaching frequency. Pair them with feasibility, the decoded gain distribution, and time to target under the matched end-to-end budget. When $\Cstar$ is unavailable, report the unnormalized decoded mean gain and cut-value threshold or state the certified bound used for normalization. Circuit depth or any one ratio alone is insufficient.} Such tests neither prove nor refute the theorem, and a low certified decoded gain ratio need not imply poor empirical cuts.
%
%\revcut{In terms of performance (cf. Table~\ref{tab:measures}), our results make clear}
%\revcut{that reporting any one ratio alone is insufficient.}

%Decoder-consistent POVM pullbacks formalize when a known baseline and uniformly efficient decoder give the fixed-factor centered-value transfer needed here, and when this can fail for broader objectives~\cite{SuzukiDecoder2026}. Multi-qubit codes and recursive schemes qualify only after their decoder and end-to-end cost are established. 
%
%Existing QRAO extensions include recursive rounding and
%nonvariational state-preparation heuristics~\cite{Kondo2025,He2025}.
%A particular implementation is covered only when its complete
%preparation-and-decoding pipeline satisfies
%Definition~\ref{def:operational}.
%

The themes of the paper % can be extended 
extend beyond MaxCut. 
Other optimization problems likewise require a sampleable baseline, classical hardness for recovering a fixed fraction of the advantage, and an efficient encoding and readout map. Boolean constraint satisfaction problems are natural generalizations because advantage-over-random and hardness of approximation theories are well developed classically~\cite{Hastad2001,HastadVenkatesh2004,Raghavendra2008}, but each still requires a problem-specific construction to obtain rigorous barriers.

\paragraph*{Open problems.} Four boundaries suggest concrete next steps. (i) Determine the largest uniformly achievable \revcut{decoded classical gain fraction}\revnext{decoded gain ratio} and any corresponding preparation barrier on bounded-degree graph classes. APX-completeness on cubic graphs does not by itself supply the gain gap used here. (ii) For weight-$k>1$ PCEs, identify an efficient decoder with a centered-value identity strong enough to transfer a state-energy premise. (iii) Extend the construction to other constraint satisfaction problems, with their own baselines, gaps, encodings, and readouts. (iv) Establish random-instance barriers from distributional hardness or a worst-to-average-case reduction for a common or practically relevant ensemble.

\par Our results identify worst-case complexity boundaries for quantum optimization algorithms. They leave open the broader question of which problems and instance classes admit a potential quantum advantage. The larger challenge remains to identify regimes in which quantum methods can or cannot outperform the best classical algorithms under comparable end-to-end costs.

%These worst-case barriers assume $\NP\not\subseteq\BQP$. They neither lower-bound entanglement or known-answer preparation depth nor exclude suitable restricted-ansatz parameters. Structured promises and classically attainable solution quality remain separate.

%\revnext{Two concurrent preprints by the author share the QRAO encoding but have distinct outputs. Reference~\cite{HadfieldQRAOComplexity2026} classifies NP-, StoqMA-, and QMA-complete Hamiltonian promise problems, including their behavior under compilation, without analyzing preparation success or end-to-end cost at the classical threshold. Reference~\cite{HadfieldNFC2026} proves information-theoretic tradeoffs among compression, uniformly achievable observable margins, and readout cost, without using the MaxCut-Gain gap.}

%\revnext{Here, Theorem~\ref{thm:prep} combines the classical gain gap with the QRAO decoder to constrain uniform preparation on every instance, and Proposition~\ref{cor:near} constructs near-tight relaxations on which the constraint persists. Neither preprint enters the proofs. The shared objects are the encodings, while decision complexity, information recovery, and operational preparation remain distinct.}

% ===== END shortest/discussion.tex =====

% ===== BEGIN shortest/acknowledgments.tex =====
\begin{acknowledgments}
%\begin{minipage}{0.96\linewidth}
S.H. acknowledges support from U.S. Department of Energy under grant No. DE-SC0026126. %\revcut{OpenAI ChatGPT assisted with literature synthesis, correctness checking of all results, and general polishing of the text. The author has verified correctness of all content and assumes full responsibility for the manuscript.} 
\revnext{OpenAI ChatGPT assisted with literature synthesis, consistency checks, and language editing. The author verified correctness of all content and assumes full responsibility for the manuscript.}
%\end{minipage}
\end{acknowledgments}
\paragraph*{\revnext{Data availability.}} \revnext{No empirical data were created or analyzed. All proofs and results needed to support the conclusions appear in the article.}

% ===== END shortest/acknowledgments.tex =====

\makeatletter\expandafter\gdef\csname b@apsrev42Control\endcsname{0}\makeatother
\nocite{apsrev42Control}
\bibliographystyle{apsrev4-2}
\bibliography{bib}

\appendix

% ===== BEGIN shortest/decoding.tex =====
\techheading{decoding}{Local decoding and error tolerance}
\revnext{This appendix supplies the decoding and error-tolerance details used in Sec.~\ref{sec:model} and the proof of Theorem~\ref{thm:prep}.}
For~$d$ pairwise anticommuting single-qubit Pauli axes $\sigma_a\in\{X,Y,Z\}$, $\sigma_a=Z$ for $d=1$, define a positive operator-valued measure
(POVM) with outcomes $z\in\{-1,1\}^d$ and POVM elements
\begin{equation}
 M_z=2^{-d}\left(\Id+d^{-1/2}\sum_{a=1}^d z_a\sigma_a\right).
 \label{eq:povm-shortest}
\end{equation}
The $M_z$ are each positive semidefinite,  sum to $\Id$, and
outcome $z$ occurs with probability $\Tr(\rho M_z)$.
The expectation over the outcomes of their product measurement satisfies $\E[z_i z_j]=\Tr(\rho P_iP_j)/d$ on every edge, since its endpoints occupy different qubits. This proves Eq.~\eqref{eq:round}, including for entangled $\rho$. 

For %signs $z_1,\ldots,z_d$ assigned to one qubit, 
%the corresponding
%QRAC code state is
%\[
% \frac12\left(\Id+d^{-1/2}\sum_{a=1}^d z_a\sigma_a\right).
%\]
each qubit 
its Bloch-vector component along axis $\sigma_a$ is $z_a/\sqrt d$.
Taking the tensor product of these single-qubit states over all
packed qubits gives the product state QRAC embedding of a classical
assignment.
%The Bloch vector $d^{-1/2}(z_1,\ldots,z_d)$ gives the product QRAC embedding. 
Unused slots may be filled arbitrarily. The local states and measurements admit polynomial-precision implementations.

Let $\zeta,\xi\ge0$ denote preparation and readout error budgets normalized by $W$. If preparation gives encoded energy gain at least $\tau\gstar-\zeta W$ and the readout loss, averaged over the decoder's measurement and classical randomness, is at most $\xi W$, the reduction works whenever
\begin{equation}
 \frac{\tau\varepsilon-\zeta}{d^2}-\xi>\eta_\varepsilon.
 \label{eq:error-shortest}
\end{equation}
Indeed, choose a rational threshold between the NO upper bound and the successful YES mean and apply Eq.~\eqref{eq:tail}. A trace-distance error $T\ge0$ between the intended and prepared states changes energy by at most $dWT$, because $\|H_d-W\Id/2\|\le dW/2$. Thus sufficiently small fixed implementation errors are allowed for Theorem~\ref{thm:prep}. \revcut{Appendix}\revnext{App.}~\ref{app:padding} instead requires inverse-polynomial error.

If runtime, averaged over all internal classical randomness and measurement outcomes, is at most $t(N)$ and success is at least $s(N)$, truncation at $2t(N)/s(N)$ loses at most $s(N)/2$ of the success probability. The expected-runtime extension therefore holds for known uniform polynomial bounds. Postselection, resets, and failed trials are included in this accounting.

% ===== END shortest/decoding.tex =====

\techheading{auxiliary}{Bounded-degree gain construction}
\revnext{This appendix proves the bounded-degree construction stated in Sec.~\ref{sec:uncompressed}, Eq.~\eqref{eq:bounded-shortest}.}
If the maximum degree is at most $D$, greedily choose a heaviest remaining edge and delete its incident edges. This gives a matching $\mathcal M$ with $w(\mathcal M)\ge W/(2D-1)$ because each selected edge removes at most $2D-1$ edges, all no heavier. Orient matched pairs oppositely with independent fair orientations and assign unmatched vertices independently. With respect to these independent classical choices, every edge outside the matching is cut with probability $1/2$. Conditional expectation derandomizes the construction and proves Eq.~\eqref{eq:bounded-shortest}.

% ===== BEGIN shared/technical/padding.tex =====
\techheading{padding}{Vanishing relative gap with hard gain recovery}
\revnext{This appendix proves the padding estimates used in Proposition~\ref{cor:near}.}
Here tildes denote quantities for one padded graph, the subscript ${\rm tot}$ denotes the $d$-copy packed instance, and the subscript ${\rm pad}$ denotes its clique terms. 
Fix $d,\tau$ and set $a=\tau/d^2$. Choose fixed $\varepsilon$ with $\eta_\varepsilon<a\varepsilon/4$, and an integer $M\ge2$ with $1/(4M-2)<a\varepsilon/4$. To an $m$-vertex source graph of weight $W$, add a disjoint clique on $k=2Mm$ vertices \branchadd{(so $k$ is even)}. Give each clique edge weight $u=2mW/[k(k-1)]$, so its total weight is $mW$. Its classical gain divided by its weight is
\begin{equation}
 h_m=\frac{1}{2(2Mm-1)},\qquad mh_m\le\frac1{4M-2}<\frac{a\varepsilon}{4}.
 \label{eq:cliquegain}
\end{equation}
The padded graph has $\widetilde W=(m+1)W$ and
\begin{equation}
 \frac{\widetilde g^\star}{\widetilde W}
 =\frac{\gstar/W+mh_m}{m+1}\le\frac1{m+1}.
 \label{eq:paddedgain}
\end{equation}
Take $d$ copies and pack corresponding vertices. Then $n=(1+2M)m$ and $W_{\rm tot}=d(m+1)W$. Eq.~\eqref{eq:tight} gives $\lambda_{\max}/\Cstar_{\rm tot}-1=O(1/n)$ and $1-E_\omega/\lambda_{\max}=O(1/n)$, with upper bound constant $2d^2(1+2M)$ sufficient for both.

A successful YES preparation has decoded mean gain divided by $W_{\rm tot}$ at least $a(\varepsilon+mh_m)/(m+1)$, whereas every NO cut has classical gain divided by $W_{\rm tot}$ at most $(\eta_\varepsilon+mh_m)/(m+1)$. Their separation is
\begin{equation}
 \frac{a\varepsilon-\eta_\varepsilon-(1-a)mh_m}{m+1}
 >\frac{a\varepsilon}{2(m+1)}.
 \label{eq:paddedgap}
\end{equation}
Choosing the midpoint as threshold gives a successful decoded cut with probability $\Omega(1/m)$ per successful preparation. Thus polynomially many repetitions suffice, also when preparation succeeds only with inverse-polynomial probability. The construction uses $O(m^2)$ edges and polynomial-bit rational weights. Fixed $M$ may be large but is independent of $m$. Normalized error budgets of at most $c_0/m$, for a sufficiently small fixed $c_0>0$, preserve the gap.

For the matching lower bounds in Eq.~\eqref{eq:vanish}, let $S_a=\sum_{j=1}^k\sigma_a^{(j)}$, where $\sigma_a^{(j)}$ is Pauli axis $a$ acting on clique qubit $j$. The packed clique Hamiltonian is exactly
\begin{equation}
 H_{\rm pad}=\left(\frac{dmW}{2}+\frac{d^2uk}{4}\right)\Id
 -\frac{du}{4}\sum_{a=1}^dS_a^2.
 \label{eq:paddingham}
\end{equation}
The final sum is positive semidefinite. A product of $k/2$ singlets is annihilated by every $S_a$, so the displayed scalar is $\lambda_{\rm pad}$. The classical optimum of the $d$ cliques is $\Cstar_{\rm pad}=dmW/2+duk/4$. Hence, for $d\in\{2,3\}$,
\begin{equation}
 \lambda_{\rm pad}-\Cstar_{\rm pad}
 =\frac{d(d-1)uk}{4}=\Theta(W).
 \label{eq:paddingexcess}
\end{equation}
Because source and padding act on disjoint qubits and the source quantum optimum is at least its classical optimum, division by $\Cstar_{\rm tot}=\Theta(mW)$ proves the first lower bound. The padding contributes $d^2uk/4=\Theta(W)$ above the mixed-state baseline, proving the second. With the upper bounds, both rates are $\Theta(1/n)$.

% ===== END shared/technical/padding.tex =====

\end{document}